\documentclass[10pt,conference,letterpaper]{IEEEtran}
\usepackage{times,amsmath,epsfig,url,graphics,subfig,balance,mathtools,amsmath,amssymb,multirow,xcolor}
\usepackage[linesnumbered,ruled,algonl,vlined,noend]{algorithm2e}

\newcommand{\var}[1]{\mbox{\emph{#1}}}
\newcommand{\avar}[1]{\mbox{{#1}}}

\newcommand{\ssvar}[1]{\mbox{\tiny\emph{#1}}}
\newcommand{\assvar}[1]{\mbox{\tiny{#1}}}
\newtheorem{lemma}{Lemma}

\newtheorem{exmp}{Example} 
\newtheorem{definition}{Definition}
\newcommand{\comm}[1]{}
\newcommand{\myparagraph}[1]{\vspace{0.0pt}\noindent{\textbf{#1.}}~}
\newcommand{\km}{{$k$MaxRRST}\xspace}
\newcommand{\tqtree}{{TQ-tree}\xspace}
\newcommand{\tqb}{{TQ(B)}\xspace}
\newcommand{\tqz}{{TQ(Z)}\xspace}
\newcommand{\bl}{{BL}\xspace}

\title{The Maximum Trajectory Coverage Query in Spatial Databases}
\author{\IEEEauthorblockN{Mohammed Eunus Ali$^1$ \qquad Kaysar Abdullah$^1$ \qquad Shadman Saqib Eusuf$^1$}
\IEEEauthorblockA{$^1$Bangladesh University of Engineering and Technology, Bangladesh\\
Email: eunus@cse.buet.ac.bd, kzr.buet08@gmail.com, s.saqibeusuf@gmail.com}\\

\IEEEauthorblockN{Farhana M. Choudhury$^2$ \qquad J. Shane Culpepper$^2$ \qquad Timos Sellis$^3$}
\IEEEauthorblockA{$^2$RMIT University, Australia, $^3$Swinburne University of Technology, Australia\\
Email: farhana.choudhury, shane.culpepper@rmit.edu.au, tsellis@swin.edu.au}}

\begin{document}
\maketitle
\begin{abstract}

With the widespread use of GPS-enabled mobile devices, an
unprecedented amount of trajectory data is becoming available from
various sources such as Bikely, GPS-wayPoints, and Uber.
The rise of innovative transportation services and recent
break-throughs in autonomous vehicles will lead to the
continued growth of trajectory data and related applications.
Supporting these services in emerging platforms will require
more efficient query processing in trajectory databases.
In this paper, we propose two new coverage queries for
trajectory databases: (i) {\em $k$ Maximizing Reverse Range Search on
Trajectories} ($k$MaxRRST);
and (ii) a {\em Maximum $k$ Coverage Range Search on Trajectories}
(Max$k$CovRST).
We propose a novel index structure, the Trajectory Quadtree
($TQ$-tree) that utilizes a quadtree to hierarchically organize
trajectories into different quadtree nodes, and then applies a
z-ordering to further organize the trajectories by spatial locality
inside each node.
This structure is highly effective in pruning the trajectory search
space, which is of independent interest.
By exploiting the $TQ$-tree data structure, we develop a divide-and-conquer approach to compute the trajectory ``service value'', and a best-first
strategy to explore the trajectories using the appropriate upper bound on the service value to efficiently process a $k$MaxRRST query.
Moreover, to solve the Max$k$CovRST, which is a \emph{non-submodular
NP-hard problem}, we propose a greedy approximation which also 
exploits the TQ-tree.
We evaluate our algorithms through an extensive experimental study on
several real datasets, and demonstrate that our
$TQ$-tree based algorithms outperform common baselines by two to
three orders of magnitude.
 
\end{abstract}
\section{Introduction}\label{intro}

With the widespread use of GPS-equipped mobile devices and the
popular map services, an unprecedented
amount of trajectory data is becoming available.
For example, in Bikely\footnote{\url{http://www.bikely.com}} users
can share their cycling routes from the GPS devices, in
GPS-wayPoints\footnote{\url{http://gpswaypoints.net}} a user can add
waypoints (points on a route at which a course is changed)
in a route and share with friends, in Microsoft
GeoLife\footnote{\url{https://research.microsoft.com/en-us/projects/geolife/}}
users can share their travel routes and experience using GPS
trajectories.
Most of the popular social network sites also support sharing user
trajectories

\begin{figure}
    \centering
        \includegraphics[width=2.8in, height=1.5in]{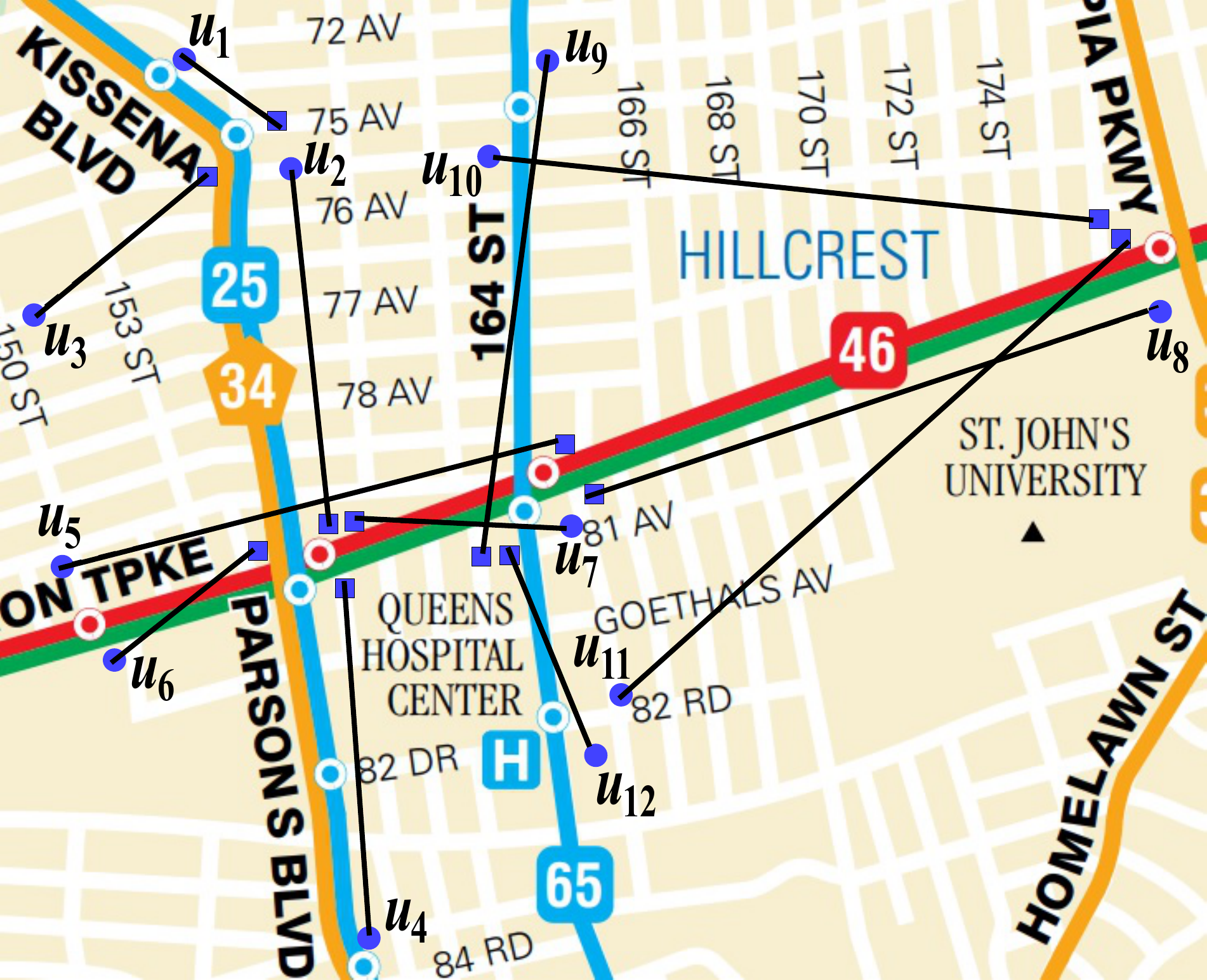}
        \vspace{-6pt}
    \caption{An example of a MaxRRST query and a Max$k$CovRST query
    with 12 user trajectories and 3 bus routes in NY, USA}
    \label{fig:example}
     \vspace{-16pt}
\end{figure}

Other than personal trips and travel routes, there are many examples
of trajectories from different transport services.
Uber served nearly $14.3$ million users in New York City between 
January-June 2015\footnote{\url{https://github.com/fivethirtyeight/uber-tlc-foil-response}}.
While user trajectory data has already been used for public transport
planning, a wide range of applications remain where planning ad-hoc
transport services are of interest.
As discussed in an IEEE Spectrum report earlier this year,
ride-sharing, taxi services, and on-demand transportation services
will be key sectors in the looming autonomous vehicle industry~{\cite{autocar}}.
Consider the following examples that highlight the potential
applications in planning ad-hoc transport services.

\emph{Scenario 1: }An autonomous transport service company wants to introduce new service routes that can serve the maximum number of users who are currently using other forms of transportation (e.g., personal
cars) for daily commute. The daily commuting routes are essentially a trajectory from a source to a destination.
Since there can be many possible service routes, the
top-$k$ routes that can serve the maximum number of users are of
interest.

\emph{Scenario 2: }Consider a tourist city, where
each tourist has a list of POIs to visit, which can
be seen as a trajectory.
Now, a tour operator wants to run a bus service in $k$ different
routes to serve the maximum number of tourists.
In this case, a tourist may use the service if a number of her POIs can be visited (if it is not possible to serve the full
list by the operator).

\emph{Scenario 3: }In a smart city, consider a public transport
operator wants to provide Wi-Fi service or display moving
advertisements to commuters.
The operator has multiple buses, and among the many possible routes,
the operator can choose the top-$k$ routes which provide this
additional service to the maximum number of users for the maximum duration.

The underlying problem in all the above scenarios is to select a
limited number (top-$k$) of facility trajectories from a given set that can best ``serve'' the user trajectories.
In Scenario 1, only the start and the end
points of each user trajectory are of interest, and a user can be served by a facility (e.g., ride the bus) if a stop of that facility is sufficiently close, i.e.,
within a certain distance $\psi$ to these locations. In this case, the service of a facility is a binary notion. 

In Scenarios 2 and 3, all of the points in a user trajectory can be important as one may want to maximize
the ``service'' to the user trajectories by a facility trajectory in
terms of the number of points (e.g., the
number of POIs that a tourist can
visit) or the trajectory length (e.g., the length of a journey with advertisement display). In this case, a user can be served partially by a facility. We use the term ``service'' of a facility to refer to both these binary and non-binary measures (details in Section~\ref{problem}). 

In this paper, we address this new class of trajectory search
problems, denoted as the {\bf $k$ Maximizing Reverse Range Search on
Trajectories ($k$MaxRRST)} query which finds $k$ facilities that maximize a service measure for a set of user trajectories.
Formally, given a set $U$ of user trajectories, a set $F$ of candidate
facility trajectories, and a positive integer $k$, a $k$MaxRRST query returns $k$ facilities from $F$ with the highest service to the user trajectories in $U$. We also address another variant of the query that returns $k$ facilitates from $F$ that \emph{combinedly} serve the maximum user trajectories from $U$.
We denote this type of query as a {\bf Maximum $k$ Coverage Range
Search on Trajectories (Max$k$CovRST)}.

As the service value, i.e., how well the users are
served by a service may vary across applications, we formally define
the service value function $\var{SO}(U,f)$ to measure the service of a facility $f$ on the set $U$ of user trajectories. 
For the Max$k$CovRST problem, the service value $\var{SO}(U,F^{\prime})$ is computed for a subset of facilities $F^{\prime} \subseteq F$, where the common service provided by different
facilities to the users are considered.
Please refer to Section~\ref{problem} for details.

\begin{exmp}
{\textit {Figure~\ref{fig:example} shows an example of a MaxRRST query for user trajectories $\{u_{1}, u_2, ..., u_{12}\}$, representing
daily routes of commuters, and
three facility trajectories $\{25, 46, 65\}$ representing
the bus routes with stop points in Queens NY. Let, a user will use a facility if there is a pickup/drop-off location
of that facility within a threshold distance from her source and
destination.
Thus, $u_1, u_2, u_4$ can be served by $25$, $u_5, u_6, u_7, u_8$ by $46$, and $u_9, u_{12}$ by $65$.
Hence the bus route $46$ will be returned as the answer for the query.
For Max$k$CovRST query, since $u_{10}, u_{11}$ can be served jointly by $46$ and
$65$, the answer of the query for $k=2$ is 
\{$46, 65$\} as they can serve maximum
$8$ users, $\{u_5$, $u_6$,
$u_7$, $u_8$, $u_9$, $u_{10}$, $u_{11}$, $u_{12}\}$, where the other sets of size $2$, $\{25, 46\}$ and $\{25, 65\}$ can serve $\{u_1$, $u_2$,
$u_4$, $u_5$, $u_6$, $u_7$, $u_8\}$, and $\{u_1$, $u_2$,
$u_4$, $u_9, u_{12}\}$, respectively.}}
\end{exmp}

A major challenge of these queries is to track the different segments
of a trajectory that can be served by a facility trajectory. A user can be served partially by a facility, and a user can be served by multiple facilities. 
In most of the existing work on trajectories (\cite{TangZXYYH11,
HanCZLW14}), the points of the trajectories are indexed using a
state-of-the-art spatial indexing method to answer a query.
However, such techniques are not amenable to our problem as {\bf both
the partial service of a user trajectory (e.g., the number of POIs
from the list of interesting places that a tourist can visit), and
the combined service of multiple facilities are required in this
problem}.
Similarly, previous studies that find trajectories within a
range of a query trajectory (\cite{ShangDZJKZ14}), or find the
reverse $k$ nearest neighbor trajectories (\cite{WangBCSC17}) cannot
be used for our problem, as it would require repeating the
approaches for each facility route, which is not efficient.
Moreover, to the best of our knowledge, there is no existing work on
trajectories that can be used to efficiently answer 
Max$k$CovRST, where a user trajectory can be served
jointly by multiple facility trajectories (See
Section~\ref{sec:related} for details).

The contributions of the paper are summarized as follows:

\begin{itemize}

\item We propose a new class of trajectory queries: (i) $k$ Maximizing Reverse Range Search on Trajectories
($k$MaxRRST) which finds $k$ query trajectories with the maximum service to the users.
(ii) Maximum $k$ Coverage Range Search on Trajectories (Max$k$CovRST)
which finds $k$ trajectories that combinedly maximize the service.

\item The novelty of our work comes from the key observation that if the points of multiple user trajectories are co-located and have similar orientation, then those trajectories are likely to be served by the same facility. We propose a novel two-level index structure, the Trajectory Quadtree
($TQ$-tree) based on this idea where such trajectories are stored together. Specifically, a quadtree structure is employed to organize the trajectories in a hierarchy, and then a z-ordering is applied to organize the trajectories
by spatial locality inside a quadtree node. {\textbf{ \emph Such a structure is highly effective in pruning the search space for different segments of trajectories based on locality and orientation, which is of independent interest.}}

\item We present an efficient divide-and-conquer approach where a facility trajectory is recursively divided and the service value of the components of the facility is calculated in that subspace. For each subspace, we apply a two-phase pruning technique using the {\tqtree}. As either the partial or the complete service values are important
based on the application, we present a best-first strategy to efficiently explore the facilities using the
appropriate upper bound on the service value. We also present different conditions where the process can be safely early terminated.

\item We prove that the Max$k$CovRST is a non-submodular NP-hard problem.
We propose an efficient two-step greedy
approximation algorithm to answer Max$k$CovRST, where in the first step we compute a subset of the highest serving facilities using our $k$MaxRRST algorithm, and then greedily choose $k$ facilities.

\item We evaluate
our algorithms through an extensive experimental study on real datasets.
The results demonstrate both the efficiency and effectiveness of the algorithms.

\end{itemize}

\section{Problem Formulation}
\label{problem}

Let $U$ be a set of user trajectories where each $u\in U$ is a
sequence of point locations, $u=\{p_1, p_2, ..., p_{|u|}\}$ and $F$ be a set of facility trajectories, where each $f\in F$ is a
sequence of stop points representing the pick-up or drop-off locations of a facility route (e.g., bus route).
A user trajectory can be served by a facility in different
contexts.

First, we present the calculation of the service values of a facility for a single user in different
scenarios, and then we present a generalized function to compute the
service value of a facility or a set of facilities for the set of users $U$.

\subsection{Service value for a single user}
\myparagraph{Scenario 1} Here, $u.p_1$ and $u.p_{|u|}$ are the source
and destination locations of $u$. 
A user $u$ is only interested in using a facility $f$ if there is any stop point of $f$ within a certain distance $\psi$ from the
source and destination of $u$, i.e.,
$\var{dist}(u.p_1, f)\leq \psi \land
\var{dist}(u.p_{|u|}, f)\leq
\psi$.
Here, $\psi$ can be set based on the distance/range that a user can
cover on foot or by other means for availing the transportation
facility.
In such cases, we can define the Boolean service function $S(u,f)$ as:

\begin{math}
S(u,f) =
\begin{cases}
1 & {\text{if }} u {\text{ is served by }} f \\
0 & {\text{otherwise.}}
\end{cases}
\end{math}

\myparagraph{Scenario 2} In non-binary cases where $u$ can be served partially by $f$, the service can be computed based on the number of points in $u$ that can be served by $f$, $\var{scount}(u,f)$ (e.g., the number of POIs that can be visited by a tourist) as described in Scenario 2. Then the service value of $f$ is calculated as:
$S(u,f) = \frac{\displaystyle \var{scount}(u,f)}{\displaystyle |u|}$.

\myparagraph{Scenario 3} When the interest is in
maximizing the length of $u$ served by $f$, $\var{slength}(u,f)$ (e.g., the length of journey with advertisement display), the service value is calculated as: $S(u,f) =
\frac{\displaystyle \var{slength}(u,f)}{\displaystyle \var{length}(u)}$, where $\var{length}(u)$ is the total length of $u$.
Note that the length of two trajectories with the same number of
points can be different based on the length of the segments between
those points.

\subsection{Service value for the set of users}
As the objective of a facility is to maximize the service to $U$, the service value of a facility $f$ for $U$ is calculated as: 
\vspace{-4pt}
\begin{equation}
\var{SO}(U,f) = \sum\limits_{u \in U} S(u,f)
\end{equation}
For a collection $F^{\prime}$ of facilities, where
$F^{\prime} \subseteq F$, we can generalize the service value function as follows:

\vspace{-8pt}
\begin{equation}
\var{SO}(U,F^{\prime}) = \sum\limits_{u \in U} \var{AGG}_{f \in
F^{\prime}} S(u,f)
\end{equation}

Since a user can be served by more than one facility in
$F^{\prime}$, we only consider the service once if the same service
is provided by more than one facility.
The function $\var{AGG}$ takes this issue into account by 
aggregating the services provided by each $f \in F^{\prime}$ to $u$.

\myparagraph{Problem definition}
Based on the above definitions, we formally define our trajectory queries as follows.

\begin{definition} $($k$MaxRRST)$.
Given a set $U$ of user trajectories, a set $F$ of facilities, a positive integer $k$, and a service value function
$\var{SO}(\cdot)$, the $k$MaxRRST query returns the top-$k$ facilities $F^{\prime}$ from $F$ such that  $\forall f^\prime \in F^{\prime}$, $\forall f \in F\setminus F^{\prime},$ $\var{SO}(U, f^\prime) \geq \var{SO}(U ,
f)$.
\end{definition}

\begin{definition}
(Max$k$CovRST).
Given a set $U$ of user trajectories, a set $F$ of facilities, the group size $k$, and a service value function $\var{SO}(\cdot)$, let the set $\var{SG}_k$ be all possible subgroups
of size $k$ from $F$.
The Max$k$CovRST query returns a subgroup $\var{sg} \in \var{SG}_k$
of facilities such that for any other subgroup $\var{sg}^{\prime}
\in \var{SG}_k \setminus \{\var{sg}\}$, $\var{SO}(U, sg) \geq 
\var{SO}(U, \var{sg}^{\prime})$.
\end{definition}
\section{Trajectory Quad-tree (TQ-tree)}
\label{index}
The key observation behind our proposed indexing technique is, the
trajectories whose points (e.g., start points and end points) are
co-located, are likely to use the same facility.
Thus such trajectories should be stored together in the index.
Based on this observation we present a novel index, denoted as the
Trajectory Quad (TQ) tree, where trajectories with \emph{close
spatial proximity and similar orientation} are grouped and stored
together in an effective way.
For simplicity, we first describe the details of the index for
trajectories with two endpoints (source-destination), and later we
present the generalized structure for trajectories with any number of
points.
A two-level indexing is applied to index the trajectories in
a TQ-tree. We explain the index construction process and the rationale
behind each step in the following.

\myparagraph{Hierarchical organization} The space is recursively partitioned
to group spatially similar trajectories together.
Specifically, a quadtree structure is employed to partition the
space. Each node $E$ of the quadtree, denoted as a q-node is associated with 
a pointer to a list $\var{UL}(E)$ of user trajectories.

If $E$ is a leaf node, $\var{UL}(E)$ contains the intra-node trajectories, i.e., the trajectories whose both endpoints reside in
$E$.
Otherwise, $\var{UL}(E)$ consists of the
inter-node user trajectories, i.e., trajectories whose two endpoints
reside in two immediate child nodes of $E$. A node of the quadtree is partitioned until there is no such inter-node trajectories left to be stored with that node, or contains at most $\beta$ number of intra-node trajectories. Here, $\beta$ corresponds to the size of a memory block (or a disk
block for a disk-resident list $\var{UL}(E)$).

With each q-node $E$, an upper bound, $s_{ub}$ of the service value is stored for the trajectories stored in the subtree rooted at $E$. For Scenario 1, $s_{ub}$ of $E$ is the total number of user trajectories, for Scenario 2, $s_{ub}$ is the total number of points of the user trajectories, and for Scenario 3 $s_{ub}$ is the total length of the user trajectories stored in the sub-tree rooted at $E$, respectively.

\begin{figure}[ht]
	\hskip-0.18cm\begin{tabular}{ c c }
        \includegraphics[width=2in, height=1.55in]{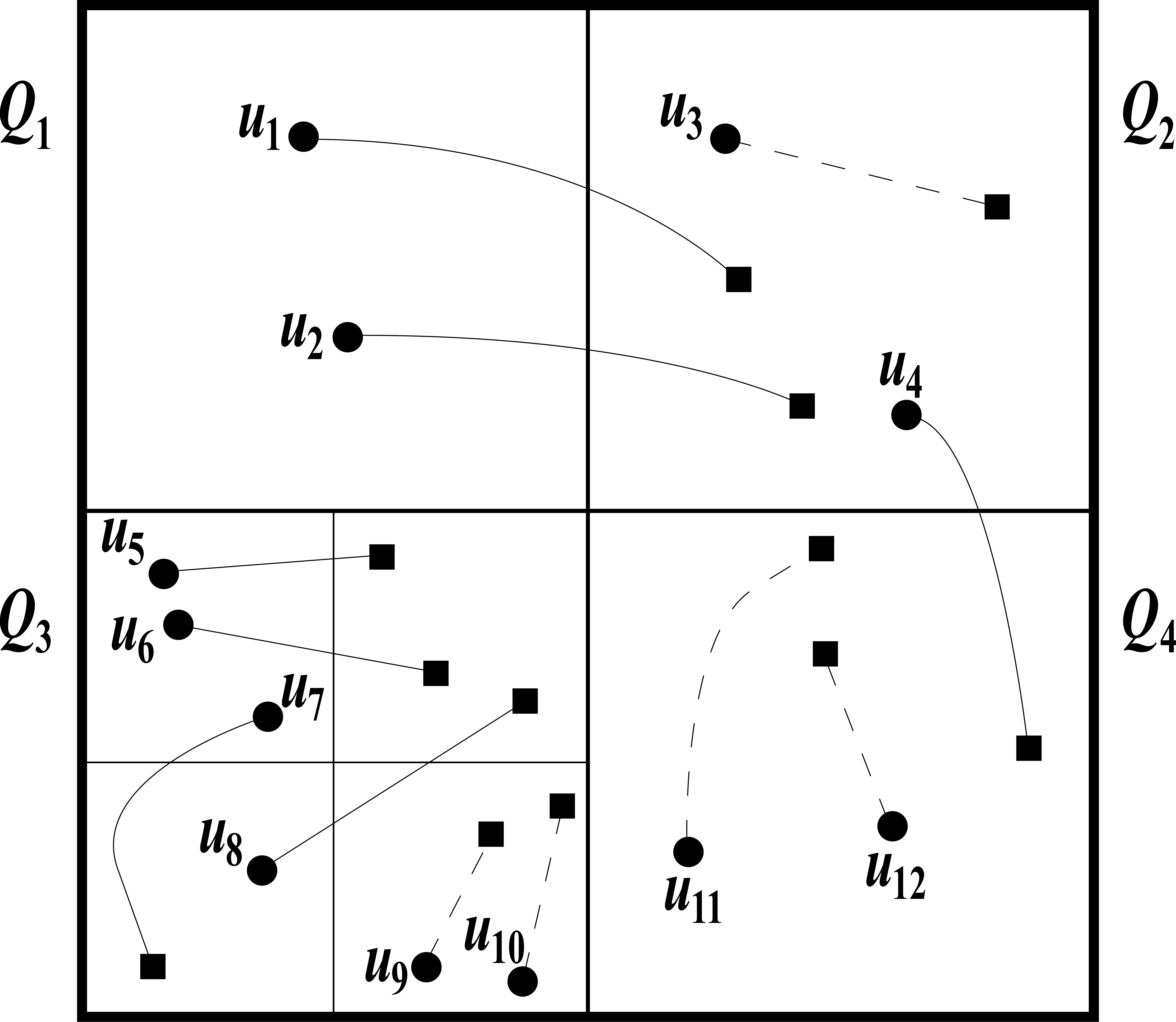} &
        \raisebox{0.22cm}{\includegraphics[height=1.5in]{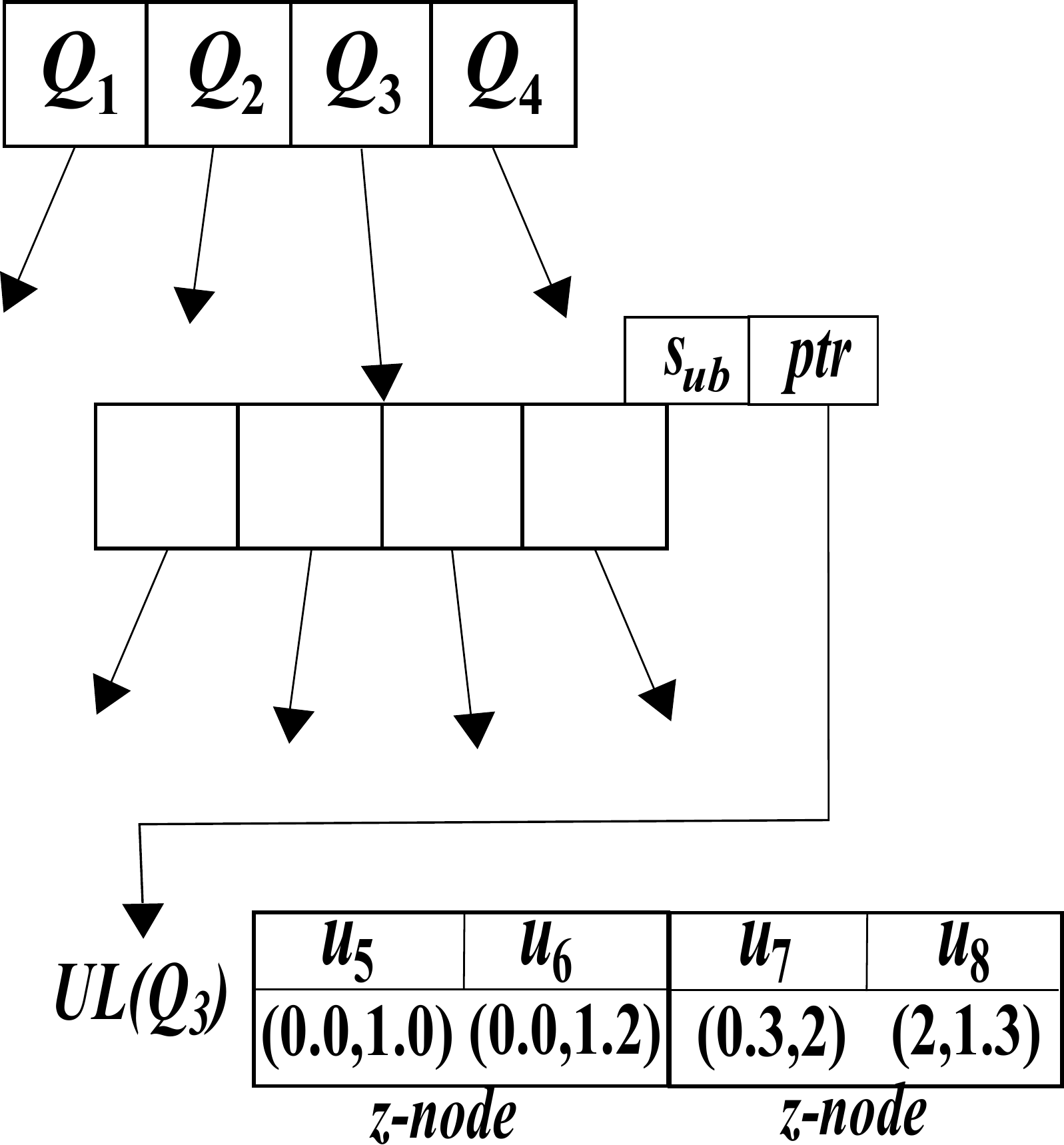}}\\
	\end{tabular}
	\vspace{-6pt}
    \caption{A TQ-tree structure for trajectories.}
    \label{fig:tqtree}
\end{figure}

\begin{figure}[ht!]
	\hskip-0.22cm \begin{tabular}{ c c c }
        \includegraphics[width=1.02in, height=0.9in]{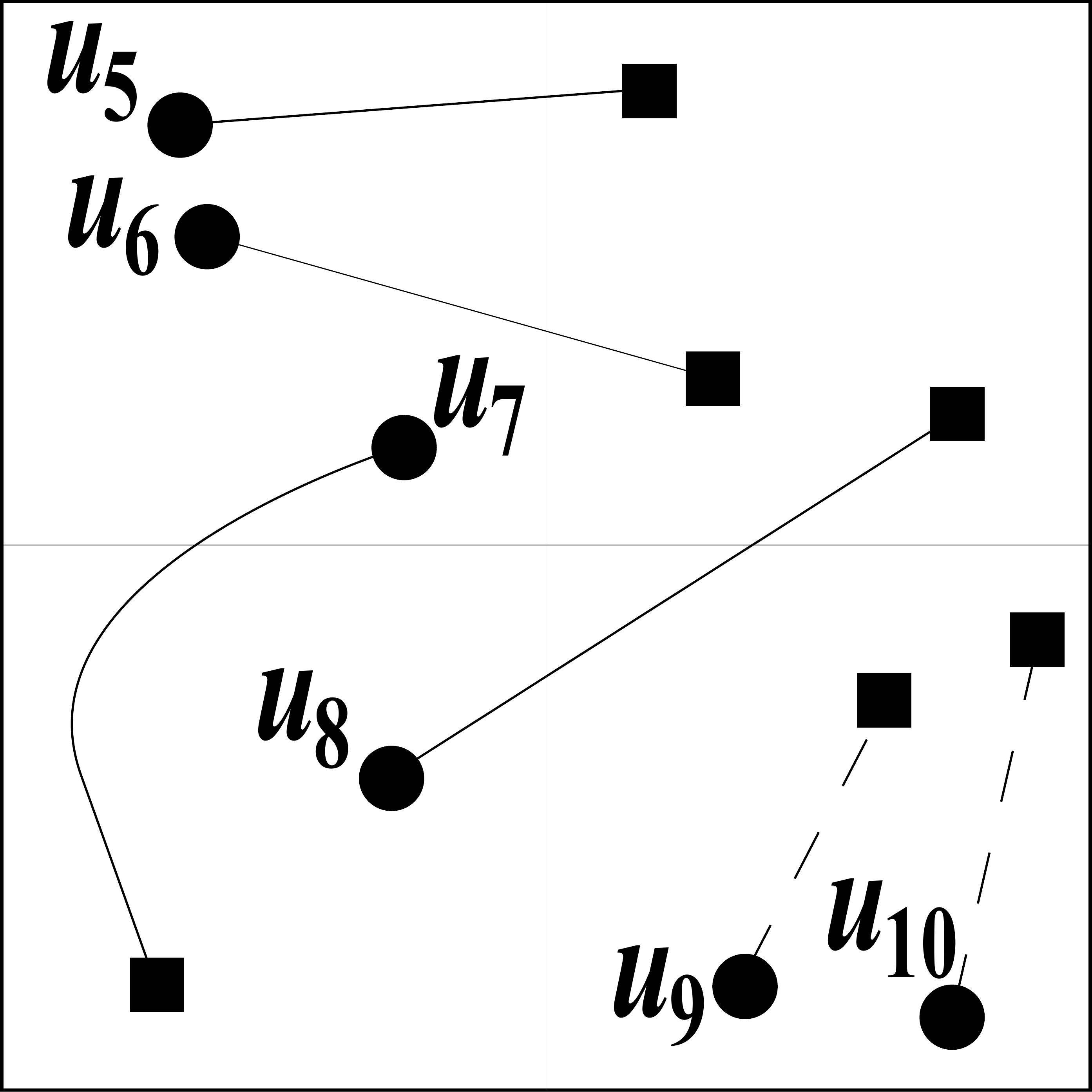} &
        \includegraphics[width=1.02in, height=0.9in]{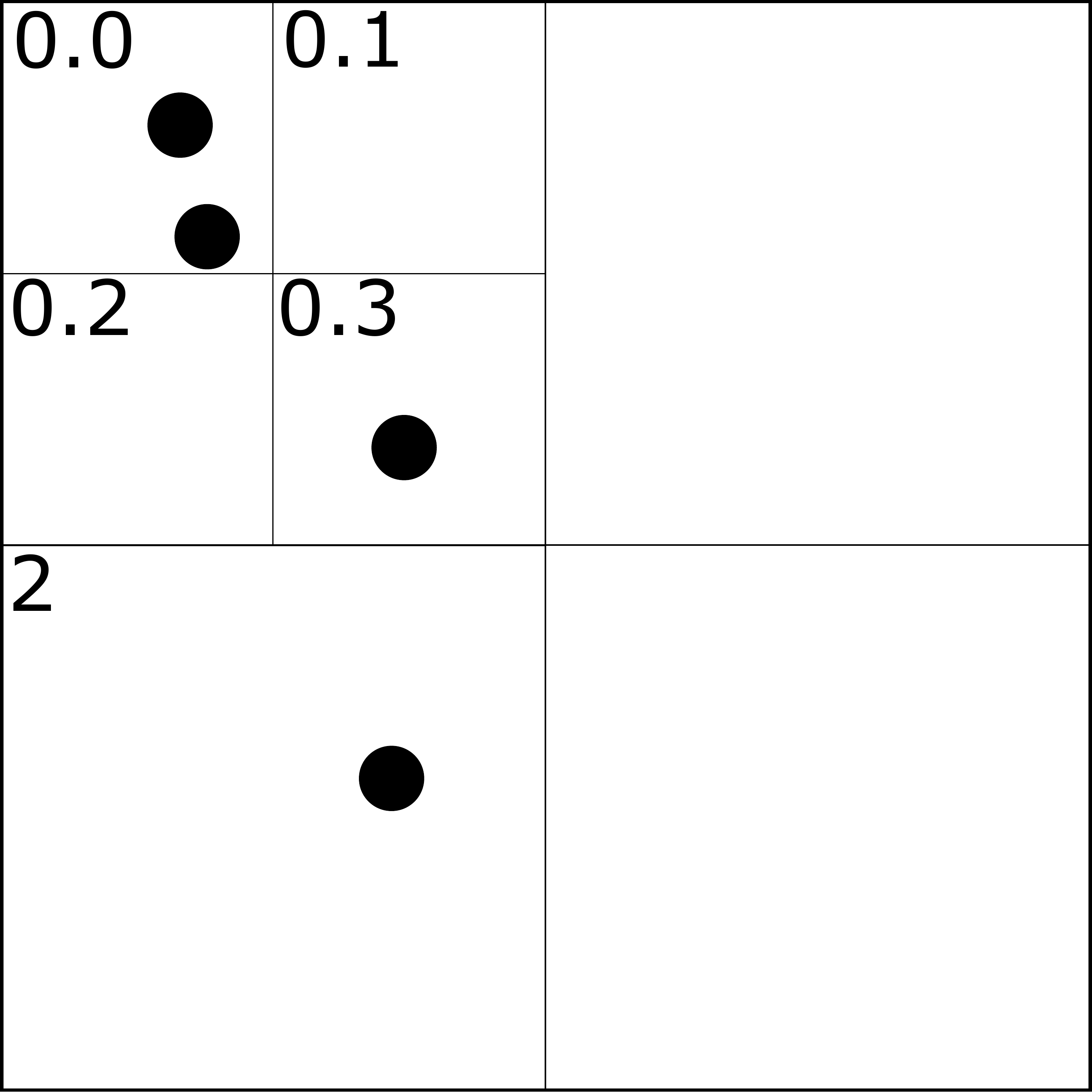} &
        \includegraphics[width=1.02in, height=0.9in]{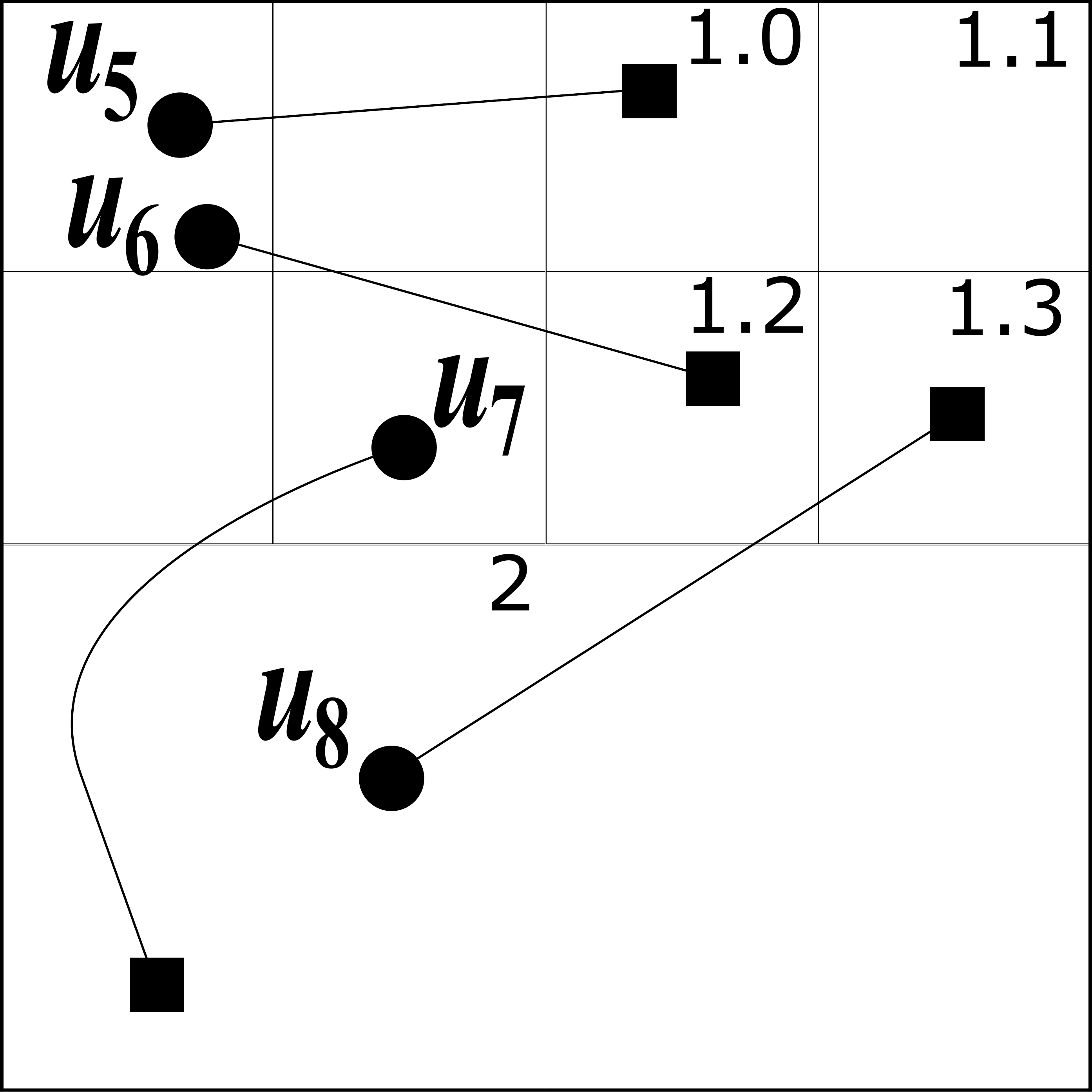} \\	(a) & (b) & (c) \\
	\end{tabular}
	\vspace{-6pt}
    \caption{(a) Inter-node trajectories of $Q_3$ (solid lines), (b) z-ordering of start points, (c) z-ordering of end points}
    \label{fig:znode}
    \vspace{-18pt}
\end{figure}

As mentioned in prior work~\cite{WangZZS15}, one of the major challenges of
indexing trajectories is in organizing the trajectories with
different lengths.
Unlike traditional spatial hierarchical indexing, where only the
leaf nodes contain the data, we store the trajectories in both leaf
and non-leaf nodes.
In this hierarchical organization, longer
trajectories are more likely to be stored in upper level nodes and
shorter trajectories in lower level nodes.
Such an organization will later facilitate efficient pruning and service (either
partial or complete) calculations for both longer and shorter
trajectories.

\begin{exmp}
{\textit {
Figure~\ref{fig:tqtree} shows an example TQ-tree for the user
trajectories, $\{u_{1}, \dots, u_{12}\}$, where let $\beta = 2$.
The space is first divided into four quadrants, $Q_1, \dots,Q_4$. As $Q_4$ only contains $\beta$ intra-node trajectories, and the trajectories in $Q_1$ and $Q_2$ are stored as the inter-node trajectories of the root node of the quadtree, these q-nodes are not partitioned further. The q-node $Q_3$ is further divided into four quadrants. The inter-node trajectories of $Q_3$ are $u_5,\dots,u_8$, and the partitioning terminates.}}
\end{exmp}

\begin{figure}[t]
	\hskip-0.4cm\begin{tabular}{ c c }
        \includegraphics[width=1.65in, height=1.35in]{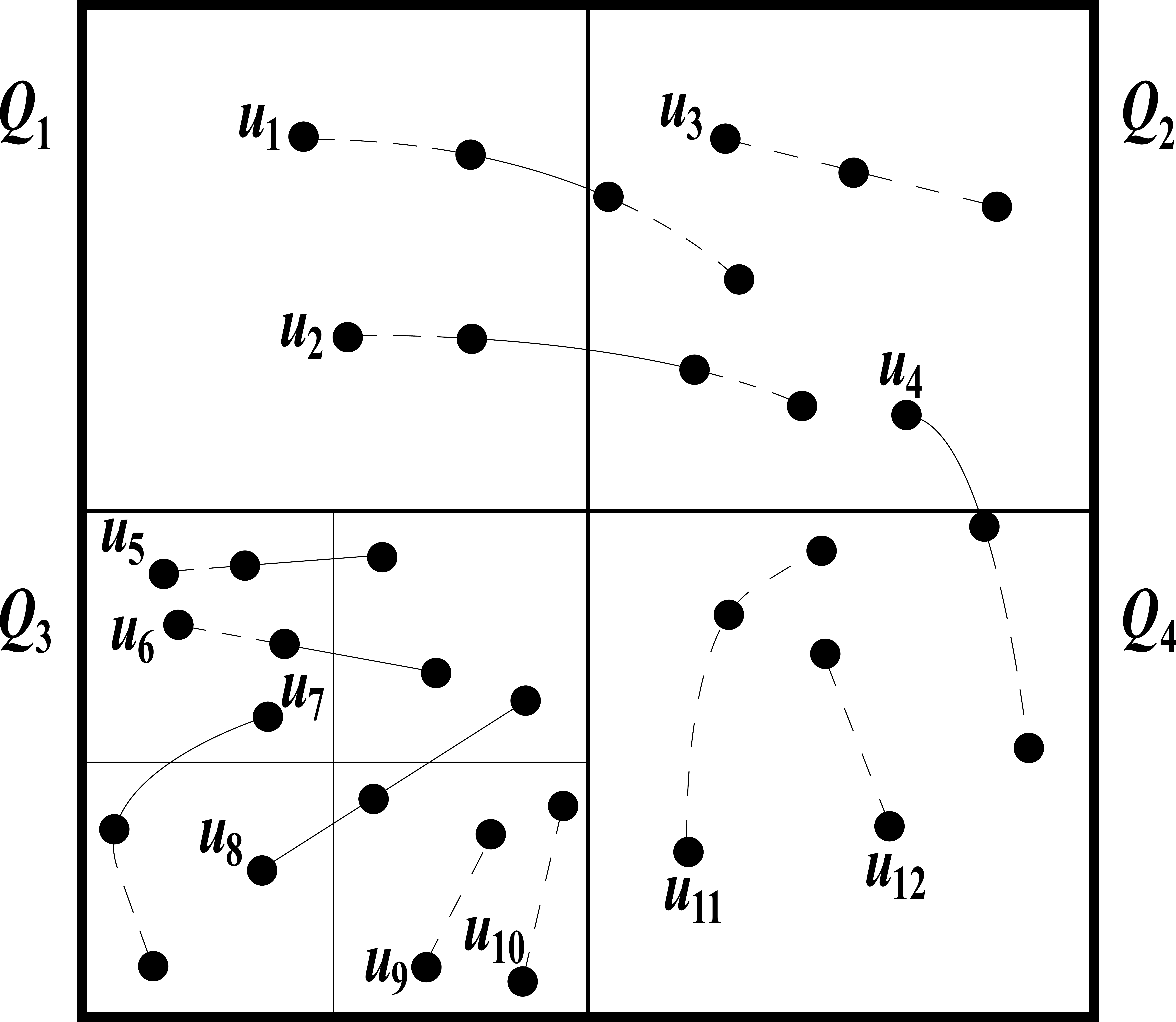} &
        \includegraphics[width=1.65in, height=1.35in]{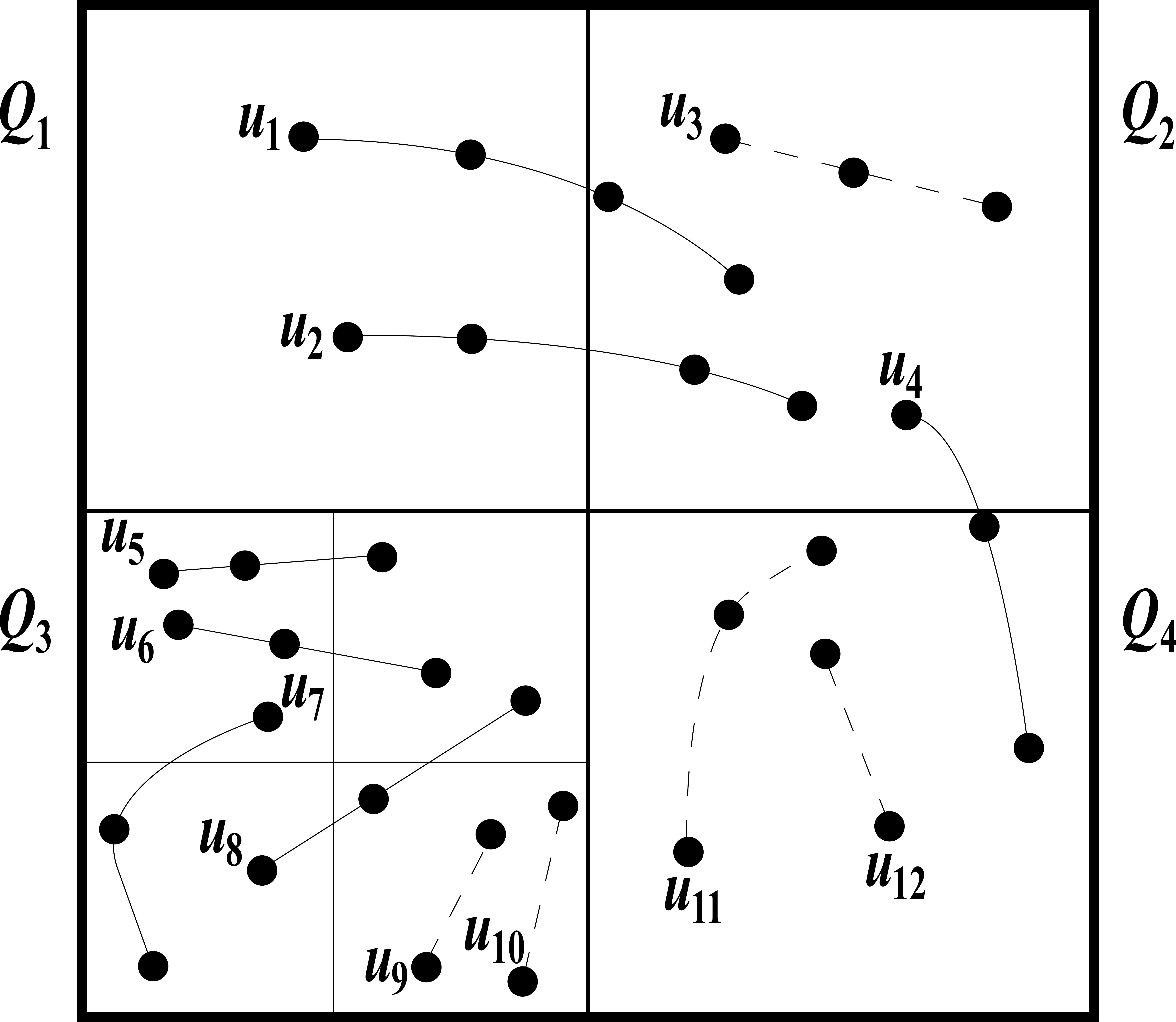}\\
	(a) Segmented TQ-tree & (b) Full-trajectory TQ-tree\\
	\end{tabular} 
	\vspace{-6pt}
    \caption{Multiple-point Trajectories in a TQ-tree (solid lines are inter-node, and dashed lines are intra-node segments).}
    \label{fig:gentraj}
    \vspace{-18pt}
\end{figure}

\myparagraph{Ordered bucketing using z-curve:} Depending on the application
scenarios and the user travel patterns, the list of trajectories in a
q-node can be quite large.
For example, if there are many users who travel everyday from the
same suburb to the city, these user trajectories may all fall under a
particular q-node.
Thus, a straightforward approach to store these trajectories as a
flat list may result in poor query processing performance.
Therefore we use a space filling curve, specifically a Z-curve
(Morton order) to order the trajectories such that the trajectories
with \emph{close spatial proximity and similar orientation} are
grouped together into a single ``bucket".
The list $\var{UL}(E)$ of each q-node is arranged as a sorted list of
buckets, where the trajectories in each bucket is also sorted by
their z-ordering. Here each bucket is referred to as a \emph{z-node}.

Specifically, for each q-node $E$, (i) we first apply the z-ordering on
the start points of the user trajectories in $\var{UL}(E)$.
The space enclosed by $E$ is partitioned until each partition contains at most $\beta$ start points of user trajectories. 
(ii) Then, we partition the space based on the end points of the user
trajectories, where each partition can contain a maximum of $\beta$
end points. Also, if multiple trajectories have the same z-id for their start points, the space is partitioned until the end point of each such trajectory is assigned a different z-id. This step enables us to distinguish between the trajectories with the co-located start points. 
(iii)
Based on the z-order numbers assigned to each points of user
trajectories, we keep them in a sorted bucket list, where each bucket
can contain at most $\beta$ trajectories.

If each trajectory is an ordered sequence of points, then we order the trajectories based on the
starting point first, and if two trajectories have the same z-orders,
we order them based on their second points, and so on. If the trajectories are defined as non-ordered
sequence of points, then we order the points of a trajectory based on the z-order, and then apply the aforementioned procedure to
sort the trajectories.

\begin{exmp}
{\textit{
Figure~\ref{fig:znode} shows the construction process of z-nodes.
The q-node $Q_3$ points to $\var{UL}(Q_3)$ of four inter-node trajectories, $u_5, u_6, u_7, u_8$. To obtain the z-ordering, the space of $Q_3$ is partitioned based on the start points of the trajectories, and each partition is assigned a z-id where a partition can
have at most $\beta = 2$ start points (Figure~\ref{fig:znode}(b)). As an example, the start points of both $u_5$ and $u_6$ have $0.0$ as
their z-ids.
Next, we apply the same partitioning strategy on the end points. The end points of $u_5$, $u_6$, $u_7$, and $u_8$ are
assigned z-ids $1.0$, $1.2$, $2.0$, and $1.3$, respectively and the partitioning terminates. Finally, a pair of z-ids for each trajectory is kept in z-nodes, each of size $\beta$ (Figure~\ref{fig:tqtree} (right)).}}

\end{exmp}

\subsection{Generalization of the Index} \label{sec:index_gen}

So far we explained our index for trajectories with
two points (source and destination), which serves only a subset of
queries described in Scenario 1.
To serve other types applications that require maintaining a sequence
of points in each trajectory where a trajectory can be served partially, we generalize our index as follows.
We propose two approaches: a segmented approach, and a full-trajectory
approach.

\myparagraph{Segmented approach:} We segment each trajectory into a sequence
of pairs of points, and then for each pair of points (segment) we apply the same
strategies described above. 

Here, indexing each segment of the trajectories in hierarchy and ordered lists will enable us to calculate the total and the partial score of service (explained later in Section~\ref{algorithm}. This process is depicted in Figure~\ref{fig:gentraj}(a).

\myparagraph{Full trajectory approach} In some applications, we need to
consider the entire trajectory contiguously as the objective function may
need to quantify the coverage of an individual user trajectory that
is served by the facilities.
For these applications, we propose a full-trajectory
approach, where we store a trajectory in the q-node
at the lowest level of the quadtree that fully contains the entire
trajectory.
In an intermediate q-node, all inter-node trajectories are sorted
using z-orders, and in a leaf q-node intra-node trajectories are
stored using z-orders (as described previously).
This scenario is depicted in Figure~\ref{fig:gentraj}(b).

Note that we use a quadtree to partition the space and then organize
the trajectory information in each quadtree cell using space filling
curves.
The main reason for using quadtree is that it supports efficient
frequent updates. Moreover, since a quadtree partitions the space into disjoint cells, we can apply z-orders to generate unique ids for the points in a trajectory.

\subsection{Index Storage Cost}
The space requirement of the hierarchical component of the TQ-tree
includes storing the nodes of the quadtree, specifically,
$\mathcal{O}(h)$, where $h$ is the height of the tree.
If only the source and destination points of each trajectory is of
interest, then a trajectory is stored exactly once in an appropriate
node of the TQ-tree.
Thus the total size of the user trajectory lists in all nodes,
$\sum_{E \in TQ-tree}UL(E)$, is at most the total number of user
trajectories $|U|$.
The same storage costs apply for the \emph{full trajectory approach}
as well.

In the generalized TQ-tree, each segment of a user trajectory is
stored in an appropriate node of the TQ-tree.
The total number of segments of a trajectory $u$ is $|u| - 1$, and a
segment is stored exactly once.
Thus the total size of the user trajectory lists in all the nodes,
$\sum_{E \in TQ-tree}UL(E)$ in the generalized TQ-tree is $\sum_{u
\in U} |u| - 1$.

\subsection{Updating the Index}
Since the TQ-tree uses a regular space partitioning scheme, to
insert a new user trajectory, $u$, we can quickly identify the
corresponding q-node to which $u$ belongs to in $\mathcal{O}(h)$ time.
Then, $u$ needs to be inserted in an appropriate z-node of the user
trajectory list.
If the number of points in the corresponding z-node does not exceed
the threshold $\beta$, no further partitioning is needed.
The points of $u$ are assigned the appropriate z-ids, and inserted in
the sorted user trajectory list.
Otherwise, the corresponding z-node is partitioned and the z-ids are 
assigned to the points of $u$.
Since the z-ids of the existing user trajectories in that z-node may
change, we may need to re-assign z-ids to the trajectories.
This re-assignment needs to be done for at most $\beta$
trajectories in that z-node.

\comm{
Thus, we use a quad-tree based partitioning of the q-node space and assign different numbers to different partitions using z-order based space filling curve. Finally, we linearly order trajectories based on the z-order numbers of the start and end points of user trajectories.

 into different cells. Each cell corresponds to a node, also called $q$-node, in a TQ-tree. There are two types of q-nodes: an intermediate or an internal node, and a leaf node. An intermediate q-node maintains (i) a pointer, $ptr$, to a list of inter-node user trajectories, i.e., trajectories whose two endpoints reside into two child nodes of the node, and (ii) an upper bound, $s_{ub}$, of the service value for user trajectories that are indexed in all of its descendent  nodes. For example, in a simple case, $s_{ub}$ of a $q$-node is the number of user trajectories in the sub-tree rooted at the $q$-node. A \emph{leaf node} maintains a pointer to a list of user trajectories whose two endpoints reside in the cell that corresponds to the node, which we call intra-node user trajectories. 

The above hierarchical organization of trajectories in the TQ-tree enables us to group longer trajectories in upper level nodes and shorter trajectories in lower level nodes. Since longer trajectories are likely to satisfy more query trajectories than that of shorter ones,  keeping longer trajectories in upper level would facilitate quick matching of user trajectories with the query trajectory.

Each q-node maintains a list of user trajectories, $u\_list$. Depending on the application scenarios and user travel patterns, this list can be quite large. For example, there can be many users who travels everyday from different suburbs to the city, and  thus these user trajectories may all fall under a a particular $q$-node. Thus, a straightforward approach of storing these trajectories in a memory list (or in disk) may result in poor performance while answering user queries. Thus, we propose a z-order (or, equivalently morton code) based ordering for trajectories that ensures grouping of trajectories with \emph{close spatial proximity and similar orientation} into a single bucket (or a disk block). 

The key idea of sorting user trajectories are based on the following observation. Trajectories whose start points and endpoints are co-located in two different regions are likely to use the same facility, and thus can be stored in a single block. Thus, we use a quad-tree based partitioning of the q-node space and assign different numbers to different partitions using z-order based space filling curve. Finally, we linearly order trajectories based on the z-order numbers of the start and end points of user trajectories. Specifically, we first partition the space based on the starting points of user trajectories, where each partition/cells has maximum threshold $\alpha$ number of starting points of user trajectories.  Then, we partition space based on the end points of user trajectories, where each partition can contain a maximum of $\beta$ number of end points of user trajectories. Now, we have z-order numbers for start and end points of user trajectories. Then, based on these z-order numbers assigned to each points of user trajectories, we keep in a sorted bucket list, where each bucket can contain at most a threshold  $\beta$ number of user trajectories. Here each bucket is termed as a \emph{z-node}.

If the trajectory is an ordered sequence of points (or uni-directional), then we order the trajectories based on the starting point first, and if two trajectories have the same z-orders, we order them based on their second points, and so on. On the other hand, if the trajectory is defined as non-ordered sequence of points (or bidirectional), then we first order the points of the trajectories, and then apply the aforementioned procedure to sort all trajectories.

The construction process of TQ-tree is as follows.  We recursively divide the space into four quadrants, and a quadrant is further divided if it contains more than a threshold $\theta$ number of trajectories inside the quadrant. Note that, an internal q-node can have more than $\theta$ number of user trajectories as trajectories that cross any two quadrants of the q-node are stored in the $u_{list}$.

Figure~\ref{fig:tqtree} shows an example TQ-tree index for twelve user trajectories, $u_{1}-u_{12}$. The space is first divided into four quadrants, $Q_1, Q_2, Q_3, Q_4$. The quadtree split threshold, $\theta$, is set to four, i.e., if there are more than 4 trajectories in a q-node, we will split the node further into four quadrants. Thus, only $Q_3$ is further divided into four quadrants.

Each q-node maintains a pointer, $ptr$, to the sorted list, $u_{list}$, of user trajectories, and an upper bound service value $s_{ub}$ of the sub-tree rooted the q-node. An intermediate q-node (or non-leaf node) maintains inter-node trajectories, trajectories that span across any two of its children nodes. A leaf level q-node maintains intra-node trajectories that are fully contained in the q-node. In Figure~\ref{fig:tqtree}, inter-node trajectories and intra-node trajectories are shown using solid lines and dashed lines, respectively.

User trajectories, $u_{list}$ of a q-node are organized in z-nodes, where z-nodes are sorted first based on the z-ids (a z-id is a number assigned to a partition of the space based on z-ordering) of start points, and then based on z-ids of end points of user trajectories. Figure~\ref{fig:tqtree} (right) shows the z-nodes of a q-node $Q_3$, where $u_5$, $u_6$, $u_7$ and $u_8$ are kept in sorted order in the $u_list$  as (0.1, 1.0), (0.1, 1.2), (0.3, 2), and  (2, 1.3) based on their z-ids of start points and end points, respectively.

\begin{figure}[htbp]
	\hskip-0.18cm\begin{tabular}{ c c }
        \includegraphics[width=2in, height=1.55in]{Quad_Tree_Grid_with_Trajectories.pdf} &
        \raisebox{0.22cm}{\includegraphics[width=1.06in, height=1.35in]{Quad_Tree_Structure.pdf}}\\
	\end{tabular}
    \caption{A TQ-tree structure for trajectories.}
    \label{fig:tqtree}
\end{figure}

Figure~\ref{fig:znode} shows the construction process of z-nodes. The q-node $Q_3$ maintains $u_list$ of four inter-node trajectories, $u_5, u_6, u_7, u_8$. We first apply quadtree based partitioning of the q-node space based on start points of trajectories, where each partition can have at most 2 start points (Figure~\ref{fig:znode} (b)), here $\alpha=2$. After the above partitioning process, each partition is assigned a z-id. The figure shows that start points of $u_5$ and $u_6$ have 0.1 as their z-ids  $u_7$ and $u_8$ have 0.3 and 2 as their z-ids, respectively. Next, we apply the same partition strategy on end points of user trajectories, where the threshold $\beta$ of each partition is set to 1. Thus, in this case, end points of $u_5$, $u_6$, $u_7$ and $u_8$ are assigned z-ids 1.0, 1.2, 2, and 1.3, respectively. Finally, a pair of z-ids for each trajectory constitutes a z-value for the trajectory, which is kept in z-nodes (Figure~\ref{fig:tqtree} (right)). Note that, we assume the capacity of a z-node is equal to the threshold $\beta$, which in fact corresponds to a memory block (or a disk block in case of of a disk-resident $u_{list}$.

\begin{figure}[htbp]
	\hskip-0.22cm \begin{tabular}{ c c c }
        \includegraphics[width=1.02in, height=0.9in]{Single_Node_with_Trajectories.pdf} &
        \includegraphics[width=1.02in, height=0.9in]{Trajectory_Starts_Z_Order.pdf} &
        \includegraphics[width=1.02in, height=0.9in]{Trajectory_Start_End_Z_Order.pdf} \\	(a) & (b) & (c) \\
	\end{tabular}
    \caption{(a) a q-node $Q_3$ with all inter-node trajectories (solid lines), (b) z-ordering of start points, (c) z-ordering of end points}
    \label{fig:znode}
\end{figure}

\subsection{Generalization of the Index}

For easy of explanation, so far we assume a trajectory has two end points (source and destination), which serve only a class of queries that describe in Scenario 1 (in the Introduction). To serve other types applications that require maintaining a sequence of points in each trajectory where a trajectory can be served by a facility partially, we generalize our index structure  as follows. We propose two approaches: segmented approach, and full-trajectory approach.

In the segmented approach, we segment trajectories into a sequence of pair of points, and then for each pair of points we apply the same strategies described above for source-destination pairs. Each pair of a trajectory is assigned a weight score representing the importance of segment to the user, and thus if this portion is served by a facility, the weight of the segment contributes to the overall objective function. So considering the segments of each trajectory as an independent segment with predefined score, we can calculate the total score of trajectories served by the facilities using our previously defined data structure. This scenario is depicted in Figure~\ref{fig:gentraj} (left).

\begin{figure}[htbp]
	\hskip-0.4cm\begin{tabular}{ c c }
        \includegraphics[width=1.65in, height=1.35in]{Quad_Tree_Grid_with_Multiple_Point_Trajectories.pdf} &
        \includegraphics[width=1.65in, height=1.35in]{Quad_Tree_Grid_with_Multiple_Point_Whole_Trajectories.pdf}\\
	(a) Segmented TQ-tree & (b) Full-trajectory TQ-tree\\
	\end{tabular}
    \caption{Multiple-Point Trajectories in TQ-tree (solid lines are inter-node trajectories, and dashed lines are intra node segments).}
    \label{fig:gentraj}
\end{figure}

In some applications, we need to consider the full trajectory as a whole as the objective function may need to find out how much of an individual user trajectory is served by the facilities. For those applications, we take the second approach, full-trajectory approach. In the full-trajectory approach, we store a trajectory in the largest q-node that fully contains the full trajectory. In an intermediate q-node, all inter-node trajectories are sorted using z-orders, and in a leaf q-node intra-node trajectories are saved using z-orders (as described previously). This scenario is depicted in Figure~\ref{fig:gentraj} (right).

Note that we use a quadtree to partition the space and then organize the trajectory information in each quadtree cell using space filling curves. The main reason for using quadtree is that it supports frequent updates on user trajectories. Moreover, since quadtree partitions the space into disjoint cells, we can apply z-orders to generate unique ids for points on trajectories.

\subsection{Indexing q-node trajectories u\_list}

User trajectories in a q-node are stored in the $u_{list}$. As we have mentioned earlier, the size of this list can be quite large. Thus an efficient data structure needs to developed to quickly find the desired user trajectories for the query trajectory. Since the $u_{list}$ is formed based on the z-orderings of start and end points, which are essentially a quad-tree based partition, we proposed a second level quadtree index to index the $u_{list}$. We name this index as $u\_list\_idx$. Essentially, $u\_list\_idx$ uses the same partitioning that have been used for ordering $u_{list}$. In this case a quadtree is first formed based on partitioning schemes of the start points, and then for each leaf node of this start point based quadtree, an end point based quadtree is used to pinpoint a z-node of the $u_{list}$. Figure~\ref{fig:ulistindex} shows an $u_{list}$ and the corresponding index $u\_list\_idx$. 

\begin{figure}[htbp]
	\begin{center}
        		\includegraphics[width=1.8in, height=1.5in]{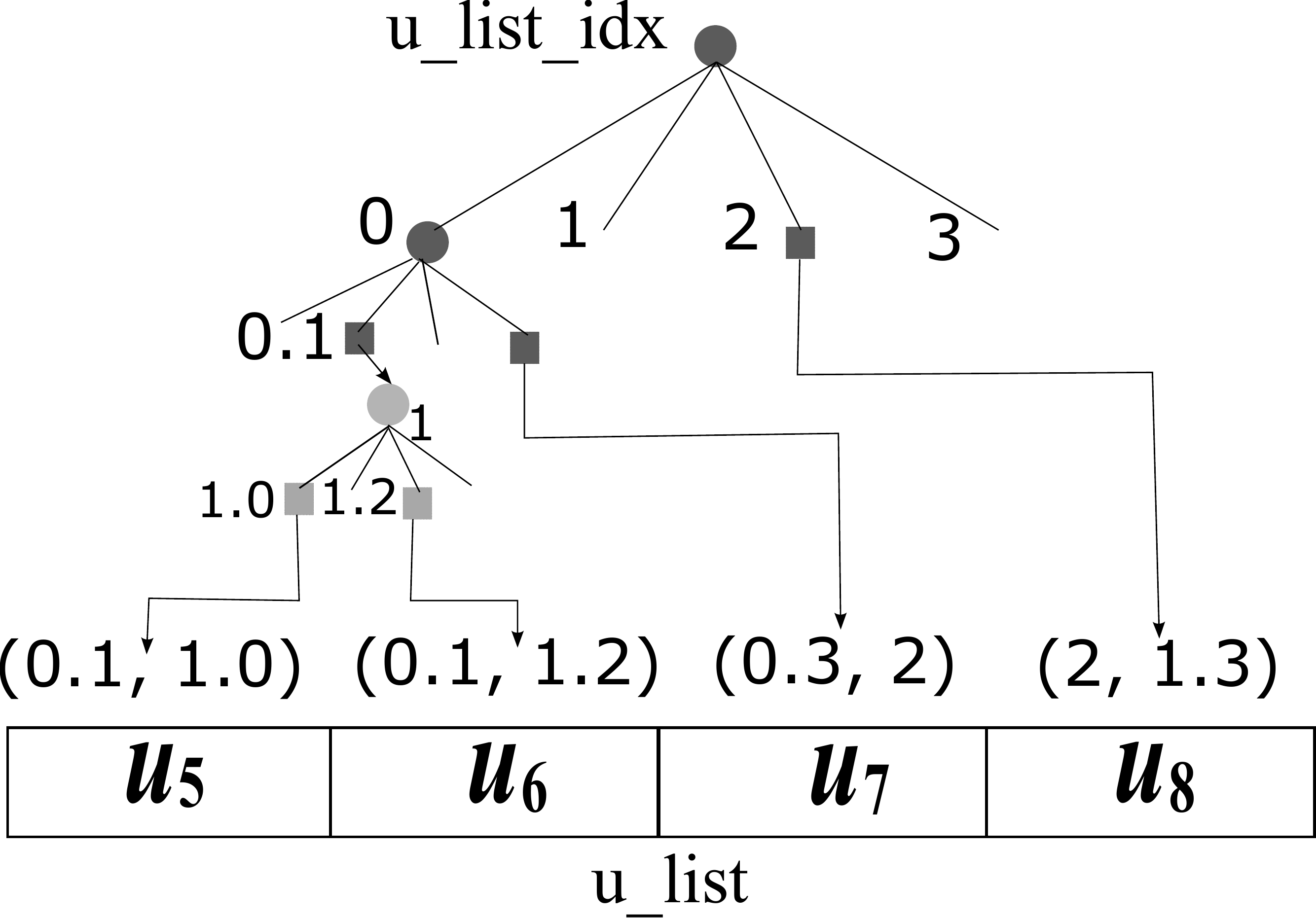} 
        	\end{center}
    	\caption{An index for user trajectories in a q-node}
    	\label{fig:ulistindex}
\end{figure}

When we need to index multi-point user trajectories, as explained in the previous section, the above $u\_list\_idx$ becomes complex due to multiple nested quadtree indices. To avoid this, we propose a flat quadtree based structure (or a linear quadtree based index for disk) to index the $u_list$. In this case, we partition the space in such a way that one partition/cell can only have a threshold number of points (start or end or any point) of any user trajectories. Then we can use a plain quadtree or a linear quadtree (by transforming the resultant partitions into z-order numbers and indexing these numbers using a B+-tree) to index user trajectories in the $u_{list}$. Figure~\ref{fig:mulistindex} a plain quadtree based index for the $u_{list}$, where the each block of the index quadtree can have at most one point of a user trajectory.

\begin{figure}[htbp]
\begin{center}
        \includegraphics[width=3.0in, height=2.5in]{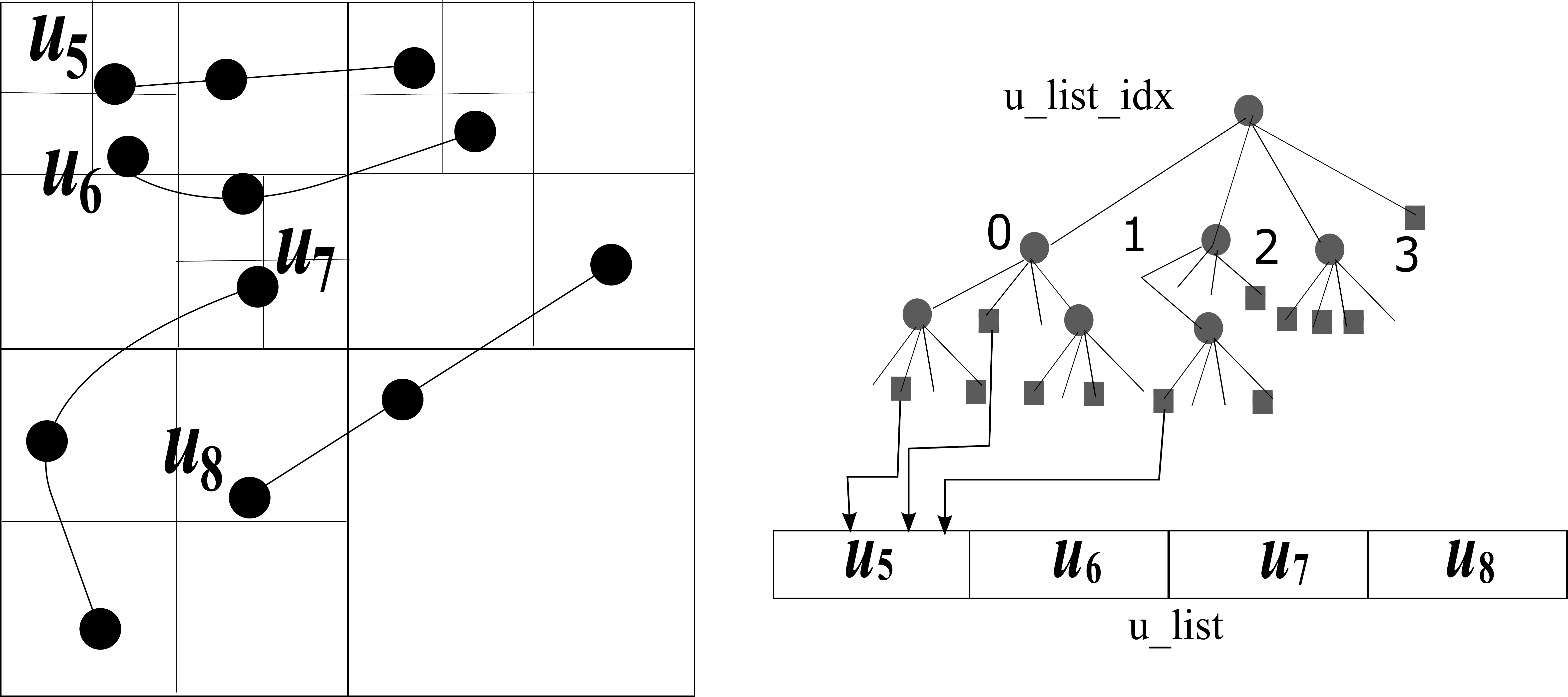} 
\end{center}
    \caption{An alternative index for user trajectories (u\_list) in a q-node}
    \label{fig:mulistindex}
\end{figure}

\subsection{Updating the Index}
For any new user trajectory, we need to update the TQ-tree. Since the TQ-tree employs regular space partitioning scheme, for a new user trajectory, $u$, we can quickly identify the corresponding q-node to which $u$ belongs to. Then, we insert $u$ into the appropriate z-node. Since, the z-ids of user trajectories in z-nodes may change, we may need to assign new z-ids and re-order only part of z-nodes affected due to new data insertion.
}

\section{Processing $k$MaxRRST Queries}
\label{algorithm}

\setlength{\algomargin}{1.2em}
\begin{algorithm}[t]
    \caption{evaluateService($Q$,$f$)}
    \label{alg:service_f}
    \begin{smaller}
    \KwIn{A q-node $Q$ of {\tqtree}, a facility component $f$}
    \KwOut{Service value $so$ of $f$ for users in subtree rooted at $Q$}
    $so \gets 0$ \\
    \lIf {$f = \varnothing$} {\Return 0 \DontPrintSemicolon}  \label{a1_end1}
    \uIf {$Q$ is a $leaf$} {\Return evaluateNodeTrajectories($Q, f$)} \label{a1_end2}
    $Q_{children} \gets children(Q)$ \\
    $f_{children} \gets$ intersectingComponents($Q_{children}, f$) \label{line:a1_fi} \\
    \For{$q_c \in Q_{children}$, $f_c \in f_{children}$\label{line:a1_for} } 
    {
    	$so \gets so +$ evaluateService($q_c,f_c$) \label{line:a1_recur}
    }
    $so \gets so +$ evaluateNodeTrajectories($Q, f$)\\
    \Return $so$
    \end{smaller}
\end{algorithm}

\begin{figure}
\centering
\subfloat[]{\includegraphics[height=0.15\textwidth]{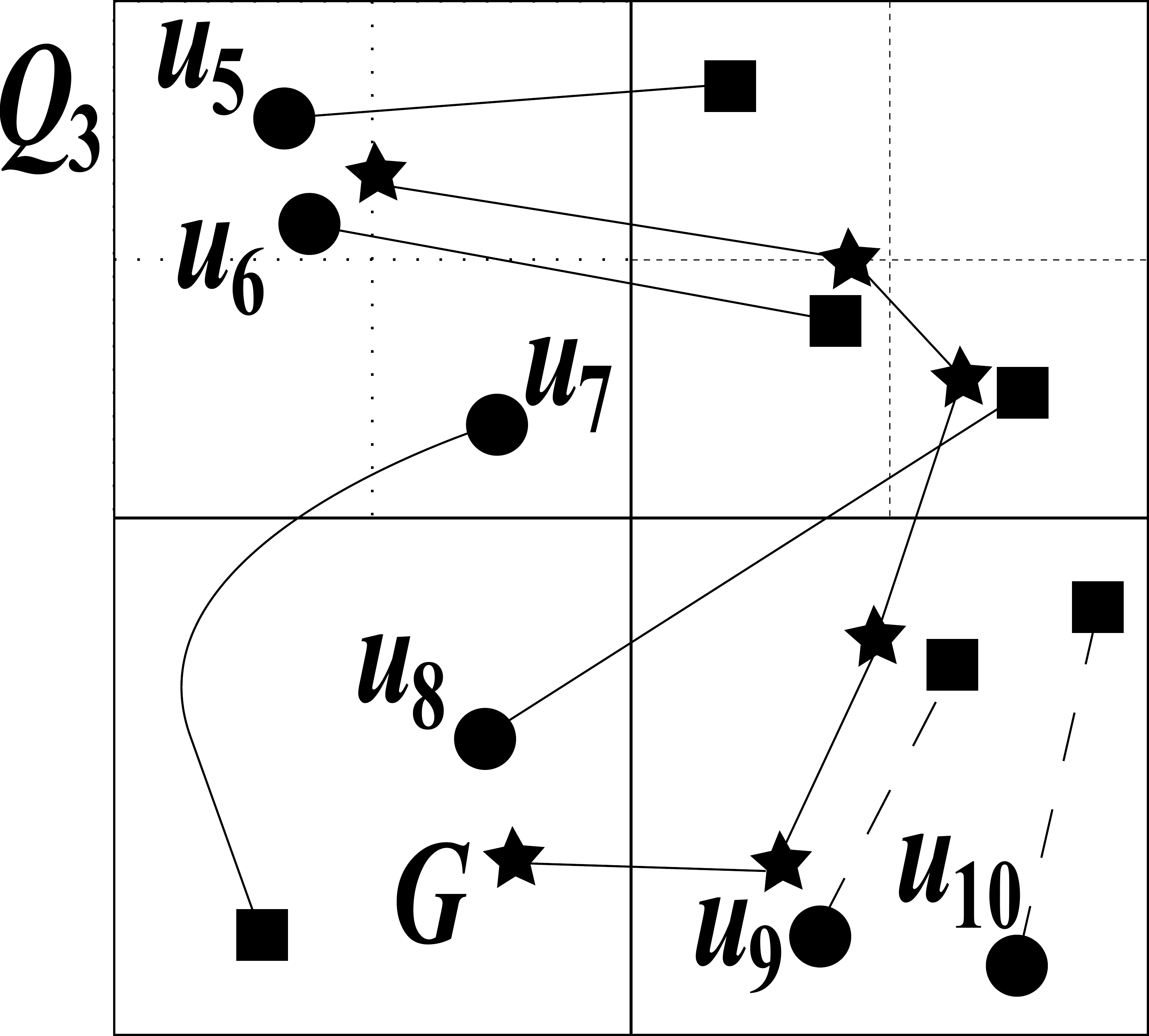}\label{fig:ow}} \hfill
\subfloat[]{\includegraphics[ height=0.15\textwidth]{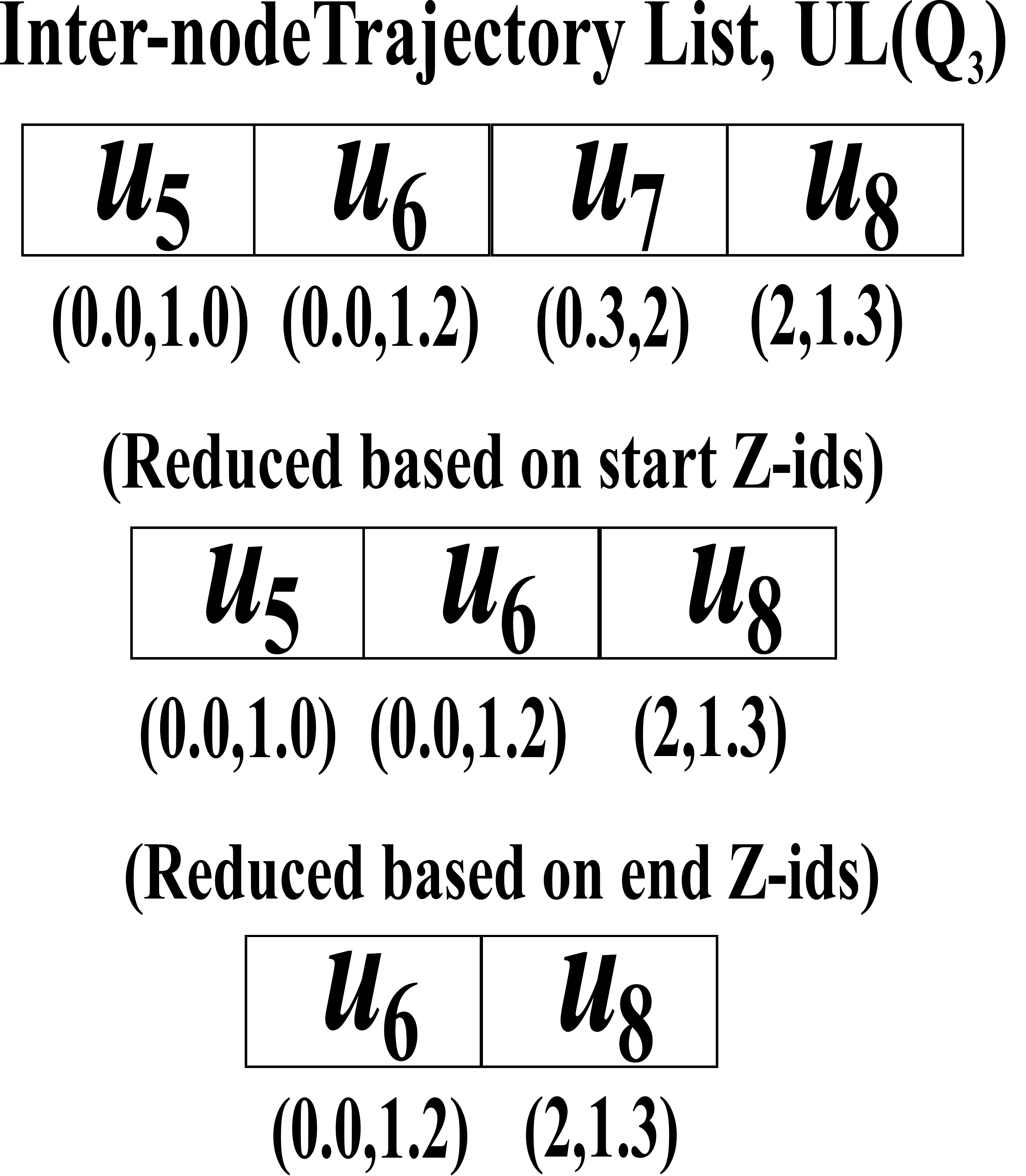}\label{fig:tw}} \hfill
\subfloat[]{\includegraphics[height=0.15\textwidth]{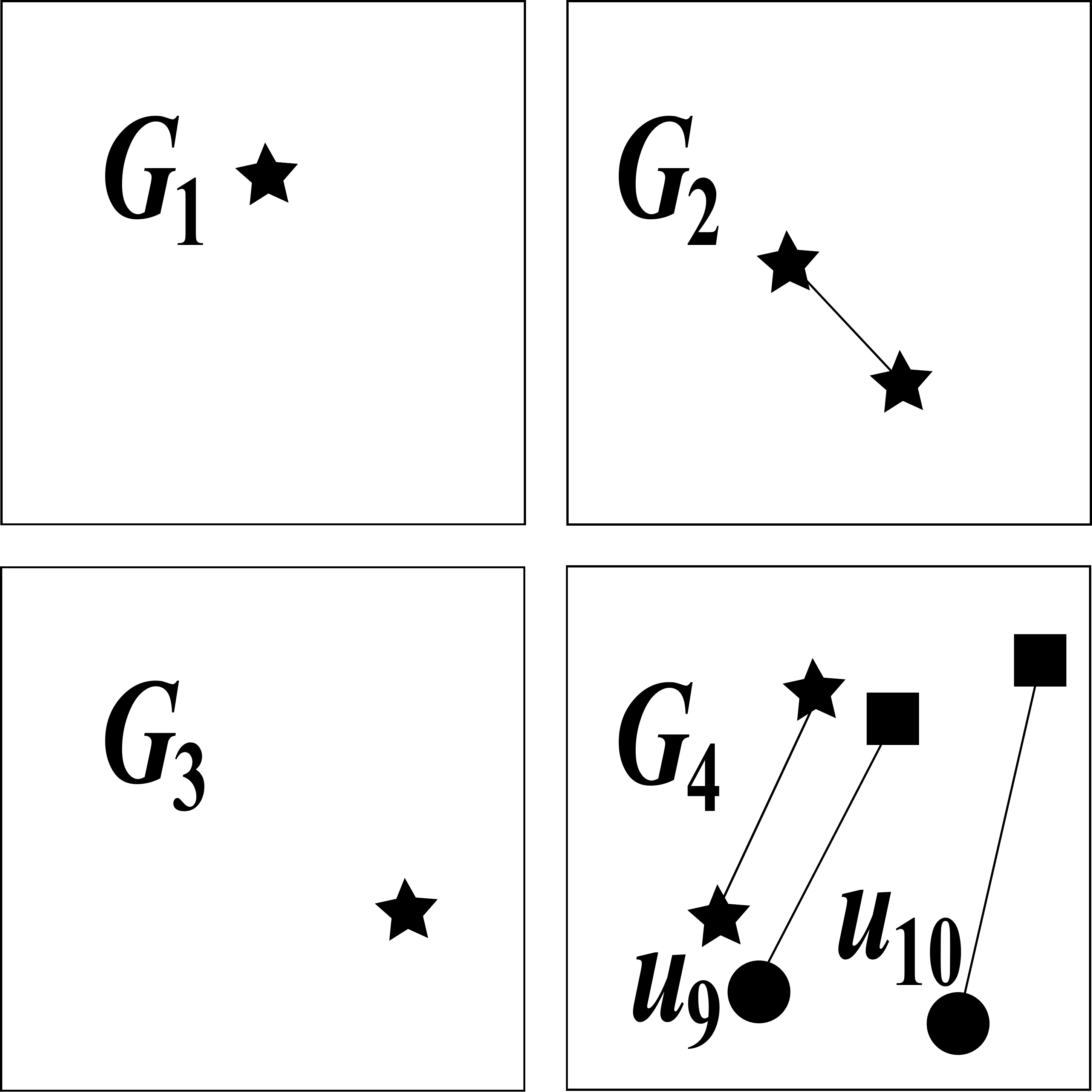}\label{fig:ow}} 
\vspace{-6pt}
\caption{(a) A q-node $Q_3$ with trajectories and facilities (b) Z-reduce for reducing $UL(Q_3)$ for $G$ (c) Recursive calls for subspaces with corresponding facility subgraphs, $G_1, G_2, G_3, G_4$}
\vspace{-22pt}
\label{fig:zreduce}
\end{figure}

In {\km} query, a user trajectory can be partially served by a
facility.
Thus, an efficient technique is needed to calculate the appropriate
service value of a facility for the set of user trajectories.
In this section, we first propose an efficient divide-and-conquer
algorithm to recursively divide a facility trajectory and traverse
only the necessary nodes of the {\tqtree} to calculate the service
value of the components of the facility in that subspace.
We apply a two-phase pruning technique using the {\tqtree}, where the
q-nodes are pruned first, and then the z-ordering of the
trajectories are used to further prune the z-nodes.
A merge step is evoked to check if the same user trajectory can be
served by the connected components of the same facility, and an upper
bound of the service value of that facility is updated from the
current state of exploration.
A best-first strategy is employed to explore the facilities
based on their estimated upper bounds of service values.
In this section, we first present our algorithm for computing the service value of a single facility $f \in F$.
Then we present our approach to find the top-$k$ facilities from $F$ with the maximum service value.

\subsection{Calculating Service Value: Divide-and-Conquer}

Algorithm~\ref{alg:service_f} shows the pseudocode for calculating
the service value of a facility $f \in F$.
Note that in our application scenarios, a user $u$ can be served by $f$ (partially or completely) if a point of $u$ is within
a threshold distance $\psi$ from any point of $f$. Thus, we cover $f$ with an extended minimum bounding rectangle (EMBR)
that includes the serving area of $f$.
However, without loss of generality, we use the term $\var{EMBR}$ and $f$
interchangeably when we match users with $f$.

Initially, the function $evaluateService(\cdot)$ is called with the root node $Q$ of the {\tqtree} for $f$. First, it finds the relevant child q-nodes of $Q$ that intersect
with $f$ (or EMBR of $f$) in the function
$\var{intersectingComponents}(\cdot)$ (Line~\ref{line:a1_fi}).
If a child q-node does not intersect, that
q-node can be safely pruned.
Otherwise, the $\var{EMBR}$ of $f$ is divided into four equal subspaces.
For each unpruned child q-node $q_c$ of $Q$ and the corresponding intersecting components of $f$, the function $\var{evaluateService}$ in
Algorithm~\ref{alg:service_f} is recursively called (Line~\ref{line:a1_recur}).

The recursive call terminates at two conditions: (i) If $f$ is empty, i.e., after division there is no
point left in that subspace that can server any user (Line~\ref{a1_end1}); and (ii)
When $Q$ is a leaf node. For a leaf node, the function $\var{evaluateNodeTrajectories}(\cdot)$ is called to compute the service value for the
intra-node trajectories in $UL(Q)$ of that node.

The function $\var{evaluateNodeTrajectories}$ is used to determine the service value that is increased
for serving the trajectories in $UL(Q)$. Algorithm~\ref{algo:nodeservice} describes that process.
A merge step is employed in this algorithm as the function
$\var{MakeUnion}$($f$) to check whether the same user trajectory can be
served by the same connected components of $f$.
Here, the connected components of $f$ are assigned unique
identifiers.
Next, we need to access the trajectories in $\var{UL}(Q)$ (that are stored as a sorted list of z-nodes according to z-order).
We apply a pruning technique using the z-order ids of the trajectory
points to get a list $T_r$ of a reduced size from $\var{UL}(Q)$ using the
$\var{zReduce}(\cdot)$ function in Lines~\ref{line:alg2_ul} - \ref{line:alg2_zr} (explained later).
For each user trajectory $t_i$ in the reduced list $T_r$, we
compute the service value gained for serving $t_i$ by $f$.

\setlength{\algomargin}{1.2em}
\begin{algorithm}[t]
    \caption{evaluateNodeTrajectories($Q$,$f$)}
    \label{algo:nodeservice}
    \begin{smaller}
    \KwIn{A q-node $Q$ of {\tqtree}, a facility component $f$}
    \KwOut{Service value $so$ of $f$ for trajectories stored in $Q$}
    $us \gets$ MakeUnion($f$) \\
    $T_q \gets$ $\var{UL}(Q)$ \label{line:alg2_ul}\\
    $T_r \gets$ zReduce($T_q, f$) \label{line:alg2_zr} \\
    $so \gets 0$ \\
    \For{$t_i \in T_r$}{
	$so = so + \text{serviceValue}(t_i,f)$
    }
    \Return $so$
    \end{smaller}
\end{algorithm}

Note that the evaluation of function $\var{serviceValue}(\cdot)$
varies across different applications.
For example, in Scenario 1, where we are only interested in serving
start and end points, $\var{serviceValue}(t_i,f)$ returns $1$ if both points of $t_i$ are within $\psi$ distance from any of the stop points of $f$ (or a connected component of $f$). Otherwise, $\var{serviceValue}(t_i,f)$ returns $0$.

The function \emph{zReduce} in Algorithm~\ref{algo:nodeservice} is
used to prune the inter-child trajectories that cannot contribute to the service value. The idea is to avoid searching the full list of inter-child
trajectories and reduce the list to a small relevant set of
trajectories based on the spatial properties of $f$ and z-ordering.
This function takes the inter node trajectory list and a component of
the facility as input.
It prunes the user trajectories based on the z-ids that the facility intersects.

\begin{exmp}
{\textit{
We explain the $\avar{zReduce}(\cdot)$ function with an example in
Figure~\ref{fig:zreduce}.
The figure shows a facility trajectory $G$, and a list $T_q$ of inter-node trajectories $\{u_5, u_6, u_7, u_8\}$ with start and end z-ids
\{(0.0,1.0), (0.0,1.2), (0.3,2), (2,1.3)\} of q-node $Q_3$.
Let $G$ can intersect nodes with z-ids $0.0, 0.1, 1.2, 1.3, 2, 3$ fully or partially, i.e., the stop points in $G$ are within $\psi$ distance to serve fully or some
portions of these z-nodes.
Thus, trajectory $u_7$ is pruned since its start z-id $0.3$ is not covered by $G$.
In this step, we get a reduced list $\{u_5, u_6, u_8\}$ with z-ids
\{(0.0,1.0), (0.0,1.2), (2,1.3)\}.
Next we look at the z-ids of end points for pruning the list further. Here, $u_5$ is pruned since its end z-id is $1.0$, and we get the final
reduced list $\{u_6, u_8\}$.
Thus in two steps the inter-node trajectory list $\{u_5, u_6, u_7,
u_8\}$ gets reduced to $\{u_6, u_8\}$.
After reducing $\var{UL}(Q)$ to $\{u_6, u_8\}$ (Figure~\ref{fig:zreduce}(b)), we divide $G$ into four
sub-spaces, $G_1, G_2, G_3, G_4$, and evaluate the service values of
these subspaces by calling Function $\avar{evaluateService}(\cdot)$
(Figure~\ref{fig:zreduce}(c)).}}
\end{exmp}

\setlength{\algomargin}{1.2em}
\begin{algorithm}[t]
    \caption{TopKFacilities($F, k$)}
    \label{algo:bestk}
    \begin{smaller}
    \KwIn{A set of facilities $F$, a positive integer $k$} 
    \KwOut{Top $k$ facility collections $F^{\prime}$} 
    Initialize a max-priority queue $PQ$; $F^{\prime} \gets \varnothing$ \label{a3_init2}\\
    \For{$f_i \in F$}
    {
	$Q \gets$ containingQNode($f_i$) \label{a3_s1}\\
	$\var{qfPair} \gets$ makePair($Q$, $f_i$)\\
	Initialize a state $S$ with id i \\
	Insert($S.{\var{qflist}},\var{qfPair}$)\\
	$S.\var{aserve} \gets 0$; 
	$S.\var{hserve} \gets Q.s_{ub}$\\
	$\var{fserve}(S) \gets S.\var{aserve} + S.\var{hserve}$  \label{a3_s2}\\
	$\var{PQ}$.push($S$, $\var{fserve}(S)$)\\
    }
    \Repeat{$|F^{\prime}|  = k$} {
	$S \gets \var{PQ}$.pop() \label{a3_repeat1}\\
	\eIf {$S.\var{qflist} = \varnothing$} 
	{
		Insert$(F^{\prime},S.\var{id})$
	}
	{$S_{\ssvar{new}} \gets$ relaxState($S$)\\
	$PQ.$push($S_{\ssvar{new}}$, $\var{fserve}(S_{\ssvar{new}})$)\\}\label{a3_repeat2}}
     
    \Return $F^{\prime}$
    \end{smaller}
\end{algorithm}

\myparagraph{Algorithm for multiple-point trajectories} In Algorithm 2, the
$\var{serviceValue}(\cdot)$ function returns a normalized score that
is achieved for serving multiple-point trajectory $t_i$ by $f$.
The normalized score depends on the requirements of the applications (e.g., Scenario 2 or 3): one may want to count
the number of points in $u$ served by $f$, or find a summation of the segments of $u$ served by $f$. To accommodate such applications, the service value calculation changes accordingly.

Based on our above algorithms, we now propose an approach that
finds the top-$k$ facilities from a set $F$ of facilities.

 \subsection{Finding Top-$k$ Facilities}

Algorithm~\ref{algo:bestk} shows the pseudocode for finding the top-$k$ facilities from $F$.
The key idea is to apply a best-first technique to
explore facilities based on their predicted upper bounds of the
service values.
The upper bound of the service value, $\var{fserve}$ of a
facility (or a collection of facilities) is computed by combining the
value of the \emph{actual} service function, $\var{aserve}$, from the
current state of exploration and the optimistic value of the service
function, $\var{hserve}$, which is estimated based on a \emph{heuristic},
i.e., the maximum service value that can be achieved by further exploration of the facility.

For each facility $f_i \in F$, we
maintain a tuple $S$ to preserve its current state of exploration. $S$ contains the following information: identity $id$ of the facility, a list $\var{qflist}$ of $\langle$q-node, facility-component$\rangle$
pair that overlaps with each other, the actual service value $\var{aserve}$ of the facility based on the actual number of
users served so far by the current state of exploration, and the maximum
value of the service $\var{hserve}$ that can
be achieved by $f_i$ in the remaining parts of the exploration.
We maintain a max-priority queue, $\var{PQ}$ of such tuples $S$ according to the
upper bound values, $\var{fserve}$, where, $\var{fserve} =
\var{aserve} + \var{hserve}$ (i.e., the summation of the actual service
value achieved and the upper bound of the service value that can be
achieved by the facility).

\setlength{\algomargin}{1.2em}
\begin{algorithm}[t]
    \caption{relaxState($S$)}
    \label{algo:relaxing}
    \begin{smaller}
    \KwIn{Current state of a collection of facilities, $S$}
    \KwOut{Relaxed state $S_{r}$}
    Initialize $S_r$\\
    \For{each pair $(Q,f) \in S.{\avar{qflist}}$}{
	$S_{r}.\var{aserve} \gets S_{r}.\var{aserve}$ + evaluateNodeTrajectories($Q$, $f$) \label{a4_as}\\
	{$Q_{\ssvar{children}}$} $\gets$ {children($Q$)}\\
    	{$f_{\ssvar{children}}$} $\gets$ {intersectingComponents($Q_{\ssvar{children}}$, $f$)}\\
    	\For {$q_c \in Q_{\assvar{children}}, f_c \in f_{\assvar{children}}$\label{a4_loop1}}{
	    \If {$f_c \ne \varnothing$}{
	    	$\var{cqgPair} \gets$ makePair($q_c$, $f_c$)\\
	    	Insert($S_{r}.\var{qflist},\var{cqgPair}$)\\
	    	$S_{r}.\var{hserve} \gets S_{r}.\var{hserve}$ + $q_c.s_{ub}$ \label{a4_loop2}\\
	    }
    	}
    }
    \Return $S_{r}$
    \end{smaller}
\end{algorithm}

The result set $F^{\prime}$ and the priority
queue $\var{PQ}$ is initialized (Line~\ref{a3_init2}) as empty.
The states are initialized for each facility, and inserted in $\var{PQ}$ (Lines~\ref{a3_s1} - \ref{a3_s2}).
Function $\var{containingQNode}(f_i)$ returns the smallest q-node,
$Q$ that contains $f_i$ (Line~\ref{a3_s1}).
A pair is formed with the facility component $f_i$ and the corresponding $Q$.
The pair $(Q,f_i)$ is inserted in $\var{qflist}$ of $S$.
We initialize $\var{aserve}$ with 0 (as no user trajectories has been
matched with $f_i$) and $\var{hserve}$ with the upper bound $s_{ub}$ of the service values stored with the node $Q$ in the {\tqtree}. As described in Section~\ref{index}, depending on the application, the upper bound of serve is different. For example, for scenario 1, $s_{ub}$ of a node $Q$ is the number of
trajectories contained in $Q$.
We insert the current state $S$ of $f_i$ along with the total upper bound of
the service value, $\var{fserve}(S)$, achieved so far by the
current state of facility exploration.

Next, we progressively explore user trajectories by
relaxing different parts of facility trajectories to find the top-$k$ facilities that maximize the service.
In each iteration (Lines~\ref{a3_repeat1} - \ref{a3_repeat2}), the facility component with the
maximum $\var{fserve}$ value is dequeued from $\var{PQ}$, and the
state is updated by relaxing the component through a function call
$\var{relaxState}(S)$ that explores the children of the corresponding
q-node, updates $\var{aserve}$ and $\var{hserve}$, and inserts the
new state into $\var{PQ}$.
If $\var{qflist}$ of the dequeued facility is empty, it implies that all components of this facility trajectory are
explored, thus the facility is added to the result set.
The process terminates when top-$k$ facilities are found, and the result list $F^{\prime}$ is returned.

\myparagraph{Relaxing state}
Algorithm~\ref{algo:relaxing} shows the pseudo-code of how to relax the state of a facility component. The input is the current state, the relaxed (or more
expanded) state is returned as output.

First, we initialize the variables of the new relaxed state $S_r$ as $S_r.id \gets S.id$, $S_r.aserve \gets  S.aserve$, $S_r.hserve \gets 0$, and $S_r.qflist \gets \varnothing$.
Next, for pair $(Q,f)$ of q-node and facility component in the $\var{qflist}$ of input state $S$, we expand the component with respect to the children q-nodes of $Q$.
In this expansion and update process, we update the value $\var{aserve}$,
i.e., number of users already served by adding the number of
inter-node trajectories of the corresponding q-node that can be served by $f$.

For this purpose, we compute the service value of that q-node by calling the evaluateNodeTrajectories($\cdot$) function, and add this value to $S_{r}.\var{aserve}$, as $S_{r}.\var{aserve}$ denotes the value of trajectories already served (Line~\ref{a4_as}).

Next we get the child q-nodes and corresponding components of the
facility. In the loop presented in Lines~\ref{a4_loop1} - \ref{a4_loop2} we update the list of $\langle$q-node,
facility component$\rangle$ pair for each of the child nodes and the maximum
value of service $S_{r}.\var{hserve}$
with the upper bound service value $s_{ub}$ stored in the child q-node (Line~\ref{a4_loop2}).
The outer loop terminates when we complete the same computation for
all the members of the $(Q,f)$ pair list of the current state.
Finally the relaxed state $S_r$ is returned.

\section{Processing Max$k$CovRST}
\label{maxkcov}
The Max$k$CovRST query is a variant of the maximum coverage problem, which is NP Hard. A similar problem is presented in \cite{vldb16}, please refer to
Lemma 1 in \cite{vldb16} for the proof of NP-hardness.

The exact solution of the problem is to iterate through all possible
combinations of $k$ facilities from $|F|$ facilities,
calculate the service value of each of them, and then
return the combination with the maximum value. 

Although a greedy solution exists with theoretically known best approximation ratio for the maximum coverage problem (\cite{maxcover}), the
assumption of the solution is that the objective function is 
submodular.
However, the objective function of the Max$k$CovRST problem is
{\emph non-submodular}, and thus the approximation ratio of that solution
does not hold.

\begin{lemma}
The service value function of the Max$k$CovRST problem is non-submodular.
\end{lemma}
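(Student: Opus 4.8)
The plan is to refute submodularity directly, by exhibiting a single small instance that violates the diminishing-returns inequality. Recall that a set function $g$ over $F$ is submodular precisely when, for all $A \subseteq B \subseteq F$ and every $x \in F \setminus B$,
\begin{equation*}
g(A \cup \{x\}) - g(A) \;\geq\; g(B \cup \{x\}) - g(B),
\end{equation*}
that is, the marginal gain of $x$ never increases as the base set grows. Hence, to prove that $\var{SO}(U,\cdot)$ is \emph{not} submodular it suffices to produce one instance $(U,F)$, one nested pair $A \subseteq B$, and one facility $x$ for which this inequality fails.

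First I would isolate the mechanism that makes the objective behave supermodularly, namely the \emph{joint} service of a single user by two distinct facilities. Under the Max$k$CovRST semantics a user can be served by a \emph{combination} of facilities even when no single facility serves it: as in the running example of Figure~\ref{fig:example}, a user $u$ whose source lies within $\psi$ of a stop of one facility $f_1$ and whose destination lies within $\psi$ of a stop of another facility $f_2$ (the situation of $u_{10},u_{11}$ with routes $46$ and $65$). For such a user neither facility alone satisfies the serving condition, so $S(u,\{f_1\}) = S(u,\{f_2\}) = 0$, while the pair does, giving $S(u,\{f_1,f_2\}) = 1$. Viewed as a set function of the facilities, the per-user service thus exhibits a strictly \emph{increasing} marginal return, which is the seed of the counterexample.

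Next I would lift this gadget to $\var{SO}$ by taking a minimal instance containing exactly this one user $u$ and the two facilities $f_1,f_2$ (we are free to construct such an instance, so no other users can interfere). Setting $A = \varnothing$, $B = \{f_1\}$, and $x = f_2$, the marginal gain of $f_2$ over the empty set is $\var{SO}(U,\{f_2\}) - \var{SO}(U,\varnothing) = 0 - 0 = 0$, whereas its marginal gain over $B$ is $\var{SO}(U,\{f_1,f_2\}) - \var{SO}(U,\{f_1\}) = 1 - 0 = 1$. Since $0 < 1$, the defining inequality is violated, establishing that $\var{SO}(U,\cdot)$ is non-submodular.

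The main (and essentially only) obstacle is to argue that this joint-service behavior is genuinely admitted by the definition of $\var{SO}(U,F^{\prime})$ rather than being an artifact of the construction. I would therefore make explicit that the aggregation $\var{AGG}$ allows a user's serving requirement to be met \emph{collectively} by points contributed by different facilities (one endpoint covered by $f_1$, the other by $f_2$); this is exactly the feature distinguishing Max$k$CovRST from an ordinary maximum-coverage instance, where each element contributes an independently fixed covered set and the objective is known to be submodular. Once this semantic point is pinned down, the arithmetic above is immediate. I would also note that restricting to a single-user instance sidesteps any worry about cancellation in the sum $\var{SO}(U,F^{\prime}) = \sum_{u \in U} \var{AGG}_{f \in F^{\prime}} S(u,f)$, since here $\var{SO}$ reduces to the single non-submodular term $S(u,\cdot)$.
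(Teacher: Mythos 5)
Your proposal is correct and uses essentially the same argument as the paper: both rely on the Scenario~1 gadget in which one facility covers only the user's source and another only the destination, so that joint service produces a strictly increasing marginal return that violates the diminishing-returns inequality. The only difference is presentational --- you instantiate the counterexample minimally with $A=\varnothing$ and $B=\{f_1\}$, while the paper phrases the same construction as a contradiction over general nested sets $A\subseteq B$.
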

\begin{proof}
Let $g(\cdot)$ be a function that maps a subset of a finite ground
set to a non-negative real number.
The function $g(\cdot)$ is submodular if it satisfies the natural
``diminishing returns'' property: the marginal gain from adding an
element $x$ to a set $A$ is at least as high as the marginal gain from
adding $x$ to a superset of $A$.
Formally, for all elements $x$ and all pairs of sets $A
\subseteq B$, a submodular function satisfies $g(A\cup{x}) - g(A) \ge
g(B\cup{x}) - g(B)$.

We will prove this lemma by contradiction.
Assume that the service function $\var{SO}(\cdot)$ of the
Max$k$CovRST problem is submodular.
Let $A$ be a set of facilities, and $\var{SO}(U,A)$ be the maximum
number of user trajectories that are combinedly served by $A$.
Now suppose that we add another facility $x$ to $A$ such that $\var{SO}(U,A\cup{x}) = \var{SO}(U,A)$ (i.e., no additional user is served by adding $x$).
If $\var{SO}(\cdot)$ is submodular, then $\var{SO}(U,B) \ge
\var{SO}(U,B\cup{x})$ must be true.

If we can find an instance where $\var{SO}(U,B) \not\ge
\var{SO}(U,B\cup{x})$
when $\var{SO}(U,A\cup{x}) = \var{SO}(U,A)$, $\var{SO}(\cdot)$ is 
non-submodular by contradiction.
Consider Scenario 1 where a user $u$ is served by a facility when both the
source and destination of $u$ is within $\psi$ distance from any point of the facility.
Let the source of a user $u$ be within $\psi$ from a
facility in $B$ but not $A$, and the destination of $u$ is not
within $\psi$ from either $A$ or $B$ (i.e., $u$ is not served by
either).
Let the facility $x$ be within $\psi$ distance from only the
destination of $u$.
Therefore, $u$ will be served by $B\cup{x}$ (source is served by $B$, and destination is served by $x$).
That is, $\var{SO}(U,B\cup{x}) \ge \var{SO}(U,B)$.
However $u$ is not served by $A\cup{x}$ as the source of $u$ is not
served by $A$ or $x$, i.e., $\var{SO}(U,A\cup{x}) = \var{SO}(U,A)$, which
is a contradiction.
So, $\var{SO}(\cdot)$ of the Max$k$CovRST problem is 
non-submodular.
\end{proof}

To the best of our knowledge there is no greedy solution with a
guaranteed approximation ratio for non-submodular functions for this
problem.
There are several optimization approaches, including genetic
algorithms, simulated annealing, or ant colony optimization that
could be used to find the maximum value of the objective function.
However, all of these solutions are offline and may require many
iterations to converge to an optima, so these solutions are not
suitable for the online computation of ad-hoc route planning
problems.
Therefore, we present a greedy solution of the Max$k$CovRST problem,
where the challenge is to efficiently find the users and the user segments that can be combinedly served by multiple facilities,
and compute the combined service value, as a user can be served by
multiple facilities and there can be overlaps in the service.
We exploit the \tqtree for our solution, as this structure enables us
to efficiently address these challenges.

\comm{
\subsection {Genetic Algorithm}

\begin{algorithm}
    \caption{maxKCovRSTGenetic($G_{LIST}$, $k$)}
    \KwIn{List of $n$ facility subgraphs $G_{LIST}$, Integer $k$}
    \KwOut{List of $k$ facility subgraphs $G_{maxCover\_k}$}
    $g \gets 1$\\
    $population \gets G_{LIST}$\\
     \While{$generation < k$}{
    	$population \gets$ crossOver($population$)\\
    	$g \gets g$ + 1\\
	$n_g \gets limitGen(k, g)$\\
	$population \gets$ bestKCollection($population$, $n_g$)\\
    }
    $G_{maxCover\_k} \gets$  bestKCollection($population, 1$)\\
    \Return $G_{maxCover\_k}$
\end{algorithm}

Algorithm 5 summarizes the steps for finding the best $k$ collections of routes that combinedly serve maximum number of users. It takes a list $G_{LIST}$ of facility subgraphs as input and gives the list $G_{maxCover\_k}$ of best $k$ facility subgraphs as output. We resort to genetic algorithm for combining smaller subgraphs to form larger subgraphs and selecting a subset of them to reduce the search space. This algorithm uses our previous algorithms proposed in Section~\ref{algorithm}. We use Algorithm 3 in the selection process of the genetic algorithm.

Initially, our first generation of population consists of $n$ input facility subgraphs (Lines 1-2). In each step, we generate a larger population by combining subgraphs of previous populations (Lines 3-7). The loop terminates when we reach  to the $k-th$ generation since generation, $g$ is an indicator of how many facility subgraphs we are interested in to merge together. In Line 4, we get the population of new generation (of larger size) by means of cross-over on the current population, i.e., combining multiple facility subgraphs. In each step, generation, $g$ is incremented (Line 5) as we have moved on to the next generation. Next we update $population$ with the best $n_{g}$ collections from all of the members of the new generation for reducing the search space (Line 6). We use our Algorithm 3 for finding best $n_g$ collections of routes. Finally when the loop terminates, we take the best solution from current population and return it as an answer (Lines 8-9). Since at the final step, each solution instance of the current generation consists of $k$ collections of routes, the returned answer represent the maximum $k$ coverage range search for trajectories.

Note that, $n_{g}$ is returned by limitGen() function, which limits the population size of the generation. It can be thought as a function that takes $k$ and current generation $g$ as input and returns $n_g$ as the reduced population size of the new generation.}

\subsection{Greedy Solution} \label{greedy_solution}
Inspired by the greedy algorithm of Fiege~{\cite{maxcover}}, which is
the best-possible polynomial time approximation algorithm for the
maximum coverage problem, we present a greedy solution for the
Max$k$CovRST problem. A straightforward adaptation is to first compute the service value for each facility and iteratively choose a facility that serves the maximum number of users that have not been served, considering the service overlap of multiple facilities for a user. Since this straightforward approach requires to evaluate the services for all facilities and keeping track of all users who have been served by each facility, this approach can be expensive when the number of users and facilities are large. 

To overcome the above limitations, we propose a two-step greedy approach, where in the first step we compute a subset ($k^{\prime} \ge k$) of the highest serving facilities using our $k$MaxRRST algorithm. In the second step we apply the above mentioned greedy algorithm to iteratively choose a facility from those facilities that serve the maximum number of users that have not been served. We have found that this approach is highly effective in practical scenarios and can respond to queries in milliseconds. Due to space constraint, we have omitted the details, but present its experimental evaluation in Section~\ref{experiment}.

\section{Experimental Evaluation}
\label{experiment}

In this section we present the experimental evaluation for our
solutions to answer the $k$MaxRRST and Max$k$CovRST queries.
As there is no prior work that directly answers these problems, we
compare our solutions with a baseline. 

Specifically, for the $k$MaxRRST query, we compare the following three
methods: (i) Baseline (\bl): In this approach, for each facility, the
user trajectories that are within $\psi$ distance are retrieved by
executing a range query in a traditional index (in our experiments, a
quadtree).
The service value of each facility is computed, and the top-$k$
facilities are returned as the result.
(ii) TQ-tree Basic (\tqb): In this method, we use a simple TQ-tree that
hierarchically organizes user trajectories using a quadtree, but
keeps a linear list for storing trajectories in each q-node as the index
structure.
The algorithm presented in Section~\ref{algorithm} is applied on this
index.

(iii) TQ-tree Z-order (\tqz): We use our proposed TQ-tree, where the
trajectories in the hierarchical structure are ordered using a
z-curve and indexed using their z-ids in each q-node of the quadtree, and apply the algorithm presented in Section~\ref{algorithm}.
\emph{We present our approach with both TQ(B) and TQ(Z) to show the additional benefits of using the Z-ordered
bucketing in the index.}

\begin{table}
	\begin{smaller}
	\begin{center}
		\begin{tabular}{ |c|c|c| }
			\hline
			Name & \# Facilities & \# of stop points\\
			\hline 
			NY Bus Route & $2{,}024$ & $16{,}999$ \\
			Beijing Bus Route & $1{,}842$ & $21{,}489$ \\
			\hline
		\end{tabular}
		\caption{Facility trajectory datasets}
		\label{tab:facilityroute}
	\end{center}
	\end{smaller}
	\vspace{-5mm}
\end{table}

\begin{table}
	\begin{center}
		\begin{smaller}
			\begin{tabular}{ |c|c|c| }
				\hline
				Name & \# Trajectories & Type\\
				\hline 
				NY Taxi-trips (NYT) & $1{,}032{,}637$ & point-to-point \\
				NY Foursquare (NYF) & $212{,}751$ & multipoint\\
				BJ Geolife (BJG) & $30{,}266$ & multipoint\\
				\hline
			\end{tabular}
		\end{smaller}
		\caption{User trajectory datasets}
		\label{tab:usertraj}
	\end{center}
	\vspace{-24pt}
\end{table}

For the Max$k$CovRST problem, we compare four different methods: (i) Greedy baseline (G-BL) that uses baseline service evaluation strategy in the straightforward greedy approach, (ii) Greedy TQ-tree basic (G-TQ(B)) that runs our greedy solution using TQ-tree basic, (iii) Greedy TQ-tree Z-order (G-TQ(Z)) using TQ-tree Z-order, and (iv) Genetic-TQ-tree Z-order (Gn-TQ(Z)) that employs genetic algorithm using TQ-tree Z-order. 

\subsection{Experimental settings}
\label{setup}
We use Java to implement our algorithms.
All the experiments were conducted in a PC equipped with Intel core
i5-3570K processor and 8 GB of RAM.
In all of our experiments, we use in-memory data structures.
Without loss of generality our data structures can be applied for
disk-based system.

\myparagraph{Facility Datasets} We use two real bus network datasets: (i)
New York (NY) and Beijing (BJ) bus routes as our facility datasets.
Table~\ref{tab:facilityroute} shows the summary of the facility datasets.

\myparagraph{User Trajectory Datasets} To accommodate a wide range of real-world user movements with different types and volumes, we use the following three datasets: (i) Yellow taxi
trips\footnote{www.nyc.gov/html/tlc/html/about/trip\_record\_data.shtml}
in New York (NYT), (ii) Foursquare check-ins\footnote{www.kaggle.com/chetanism/foursquare-nyc-and-tokyo-checkin-dataset} in New York (NYF), and (iii) Geolife GPS traces\footnote{www.microsoft.com/en-us/download/details.aspx?id=52367} in Beijing (BJG).

The taxi-trips are essentially pairs of pick-up and drop-off
locations of passengers, and thus can be considered as user
trajectories with two points.
In contrast, the Foursquare dataset consists of user check-in data for 
different users in NY, where each check-in is a stop point for a
trajectory.
We refer these trajectories as \emph{multi-point}.
We also use Geolife GPS trajectories that contain the user movement
traces of $182$ users over three years period of time resulting $30,266$ trajectories in Beijing.
This Geolife data can also be considered as \emph{multi-point} user
trajectories.
Table~\ref{tab:usertraj} summarizes the datasets used.

\myparagraph{Performance Evaluation and Parameterization}
We studied the efficiency, scalability, and effectiveness for the
baseline and our proposed approaches by varying several
parameters.
The list of parameters with their ranges and default values in bold are shown in Table~{\ref{tab:param}}. 
For all experiments, a single parameter is varied while keeping the
rest as the default settings.

For efficiency and scalability, we studied the impact of each
parameter on (i) the runtime to calculate the service value
of a facility, and (ii) the total runtime of answering the $k$MaxRRST
query.
We evaluate the performance on the user trajectory dataset with both 
source-destination points, and multiple points.
In each case we generate $100$ sets of queries with the same settings and
report the average performance.
As the greedy solutions provide an approximate
result, we also report the effectiveness of our solutions as (i) the total number of users served, and (ii) approximation ratio.

\begin{table}
  	\begin{smaller}
		\begin{center}
			\begin{tabular}{ |l|l| }
				\hline
				Parameters & Ranges \\
				\hline 
				Routes & {\bf NY}, BJ \\
				Datasets & {\bf NYT}, NYF, BJG \\
				\# Trajectories & $203308$, {\bf 357139}, $697796$, $1032637$ \\
				\# Stops ($S$) & 8, 16, {\bf 32}, 64, 128, 256, 512\\
				\# Facilities ($N$) & 8, 16, 32, {\bf 64}, 128, 256, 512\\
				$k$ & 4, {\bf 8}, 16, 32 \\
				\hline
			\end{tabular}
			\caption{Parameters}
			\label{tab:param}
		\end{center}
		\vspace{-26pt}
	\end{smaller}
\end{table}

\begin{figure}
\centering
\subfloat[]{\includegraphics[trim = 20mm 70mm 25mm 70mm, clip,width=0.29\textwidth]{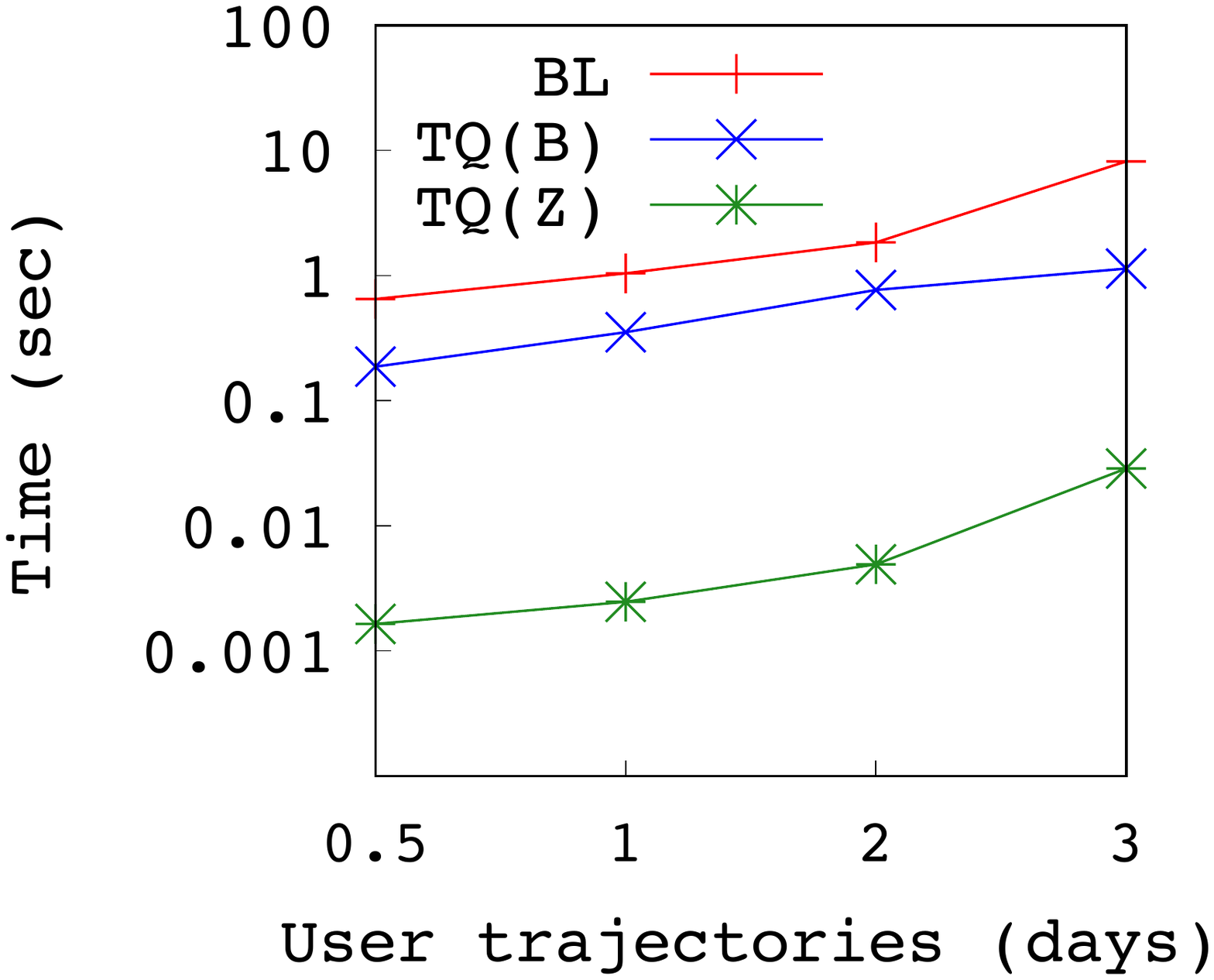}\label{fig:ow}}
\subfloat[]{\includegraphics[trim = 70mm 70mm 25mm 70mm, clip,width=0.205\textwidth]{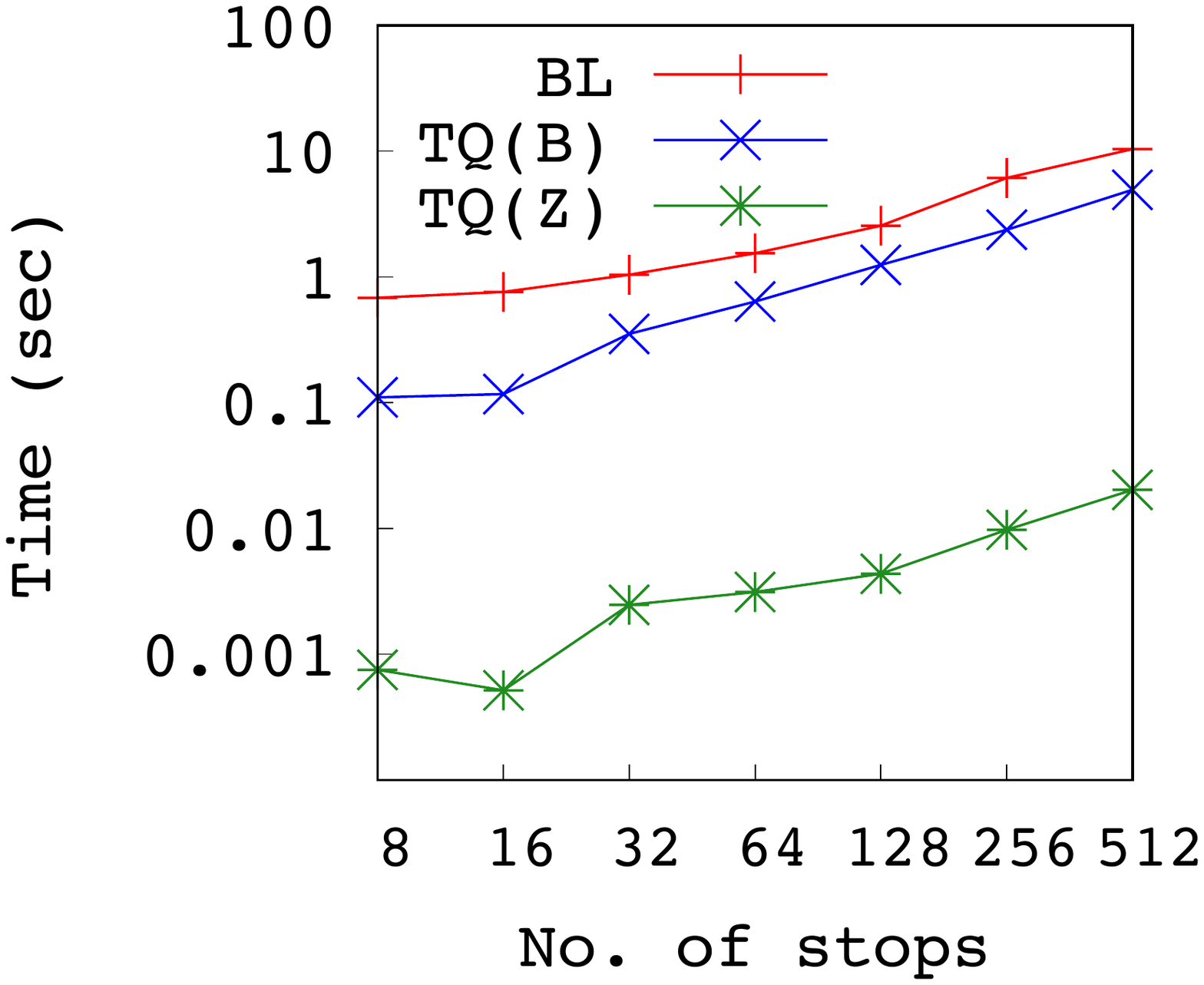}\label{fig:tw}}
\vspace{-6pt}
\caption{Evaluating service values for varying number of (a) user trajectories (b) stops in NYT dataset.}
\vspace{-22pt}
\label{fig:service}
\end{figure}

\subsection{Experimental results}

\myparagraph{1) Computing service value} 
We vary different parameters and present the processing time for calculating the service value of
a single facility in the following.

\noindent{\bf \emph{(i) No.\ of user trajectories: }} We vary the number of taxi trips in NYT dataset as $203{,}308$ (NYT-0.5), $357{,}139$ (NYT-1), $697{,}796$ (NYT-2), and $1{,}032{,}637$ (NYT-2), which corresponds to the taxi trips in $12$ hours, $1$ day, $2$ days, and $3$ days, respectively. Figure~\ref{fig:service} (a) shows the average processing time for the baseline (\bl), TQ-tree basic (\tqb), and TQ-tree Z-order (\tqz). As the \tqb organizes the trajectory segments in hierarchy in contrast to indexing points in a quadtree in \bl, \tqb is $1$ order of magnitude faster than the baseline. The spatial Z-ordering of the trajectories in \tqz results into $2$ orders of magnitude faster processing time than \tqb for calculating the service value of a single facility.

\noindent{\bf \emph{(ii) No.\ of facility stops: }}We vary the number of stops of each facility from $8$ to $512$,
and report the average processing time to compute the service value of a facility. The results (Figure~\ref{fig:service} (b)) show that \tqb and \tqz outperform the baseline by around $1$ order of magnitude and $2-3$ order of magnitude, respectively. Here, the runtime of all of the approaches gradually increase with the number of stops, as more users become eligible to be served. The benefit of the divide-and-conquer approach in the TQ-tree based approaches is higher for a lower number of stops. 

\noindent{\bf \emph{(iii) Distance threshold $\psi$: }}Although more users are likely to be eligible to be served with the increase of $\psi$, we do not observe any significant change in the performance of our proposed algorithms other than the baseline. We omit the performance graph for varying $\psi$ for brevity. 

\begin{figure*}
\centering
\subfloat[]{\includegraphics[trim = 20mm 70mm 25mm 70mm, clip,width=0.305\textwidth]{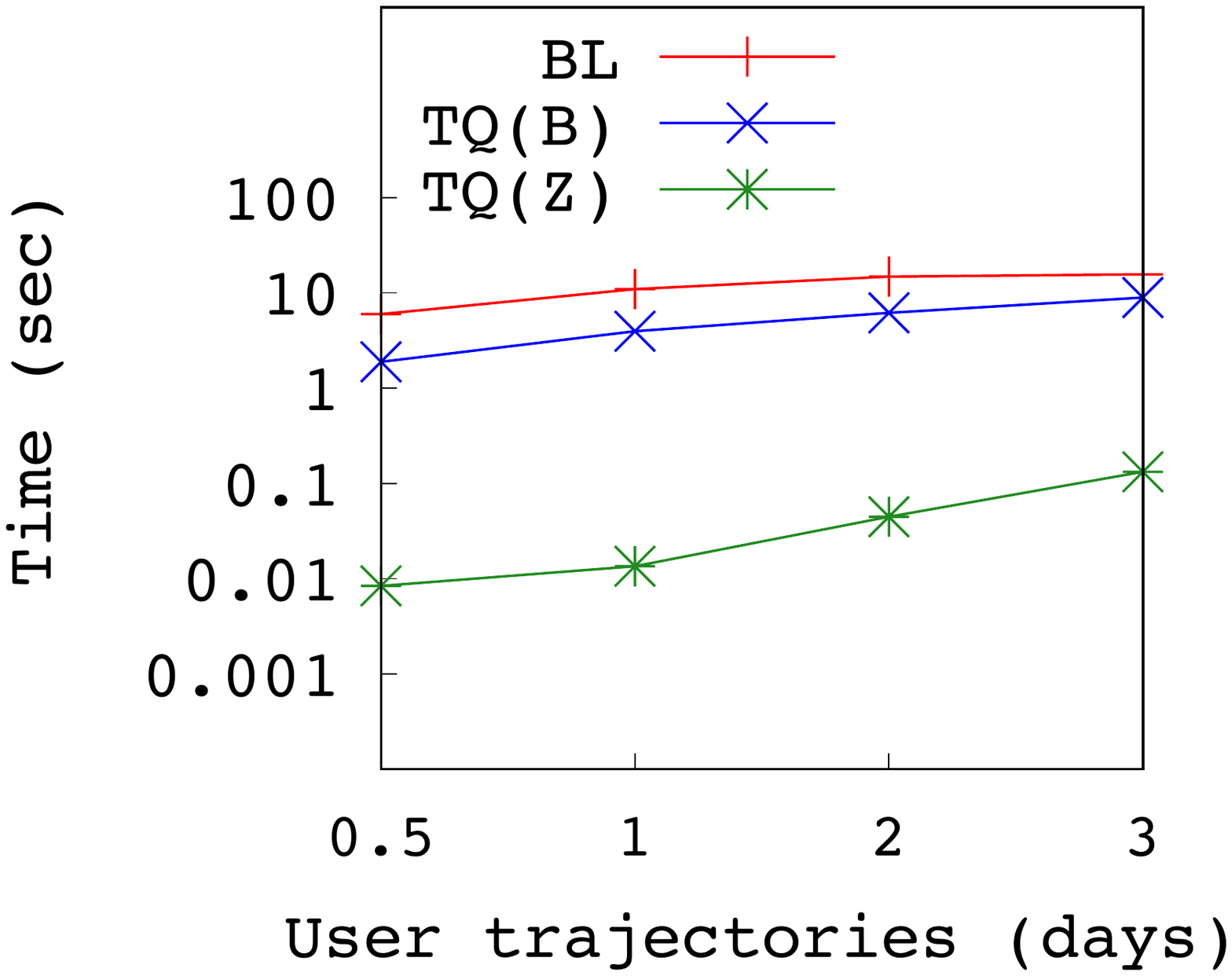}\label{fig:ow}} \hfill
\subfloat[]{\includegraphics[trim = 70mm 70mm 25mm 70mm, clip,width=0.215\textwidth]{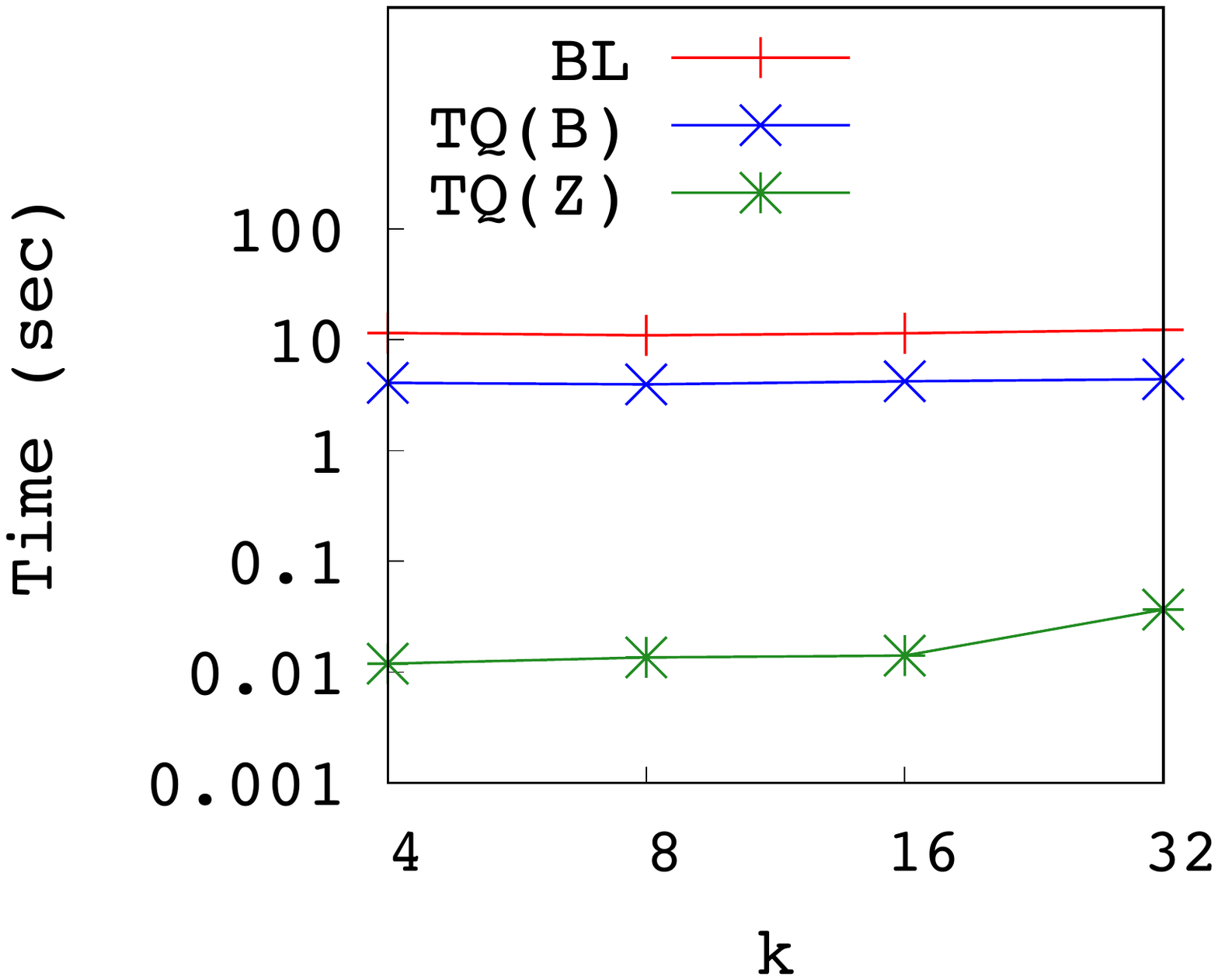}\label{fig:tw}} \hfill
\subfloat[]{\includegraphics[trim = 70mm 70mm 25mm 70mm, clip,width=0.215\textwidth]{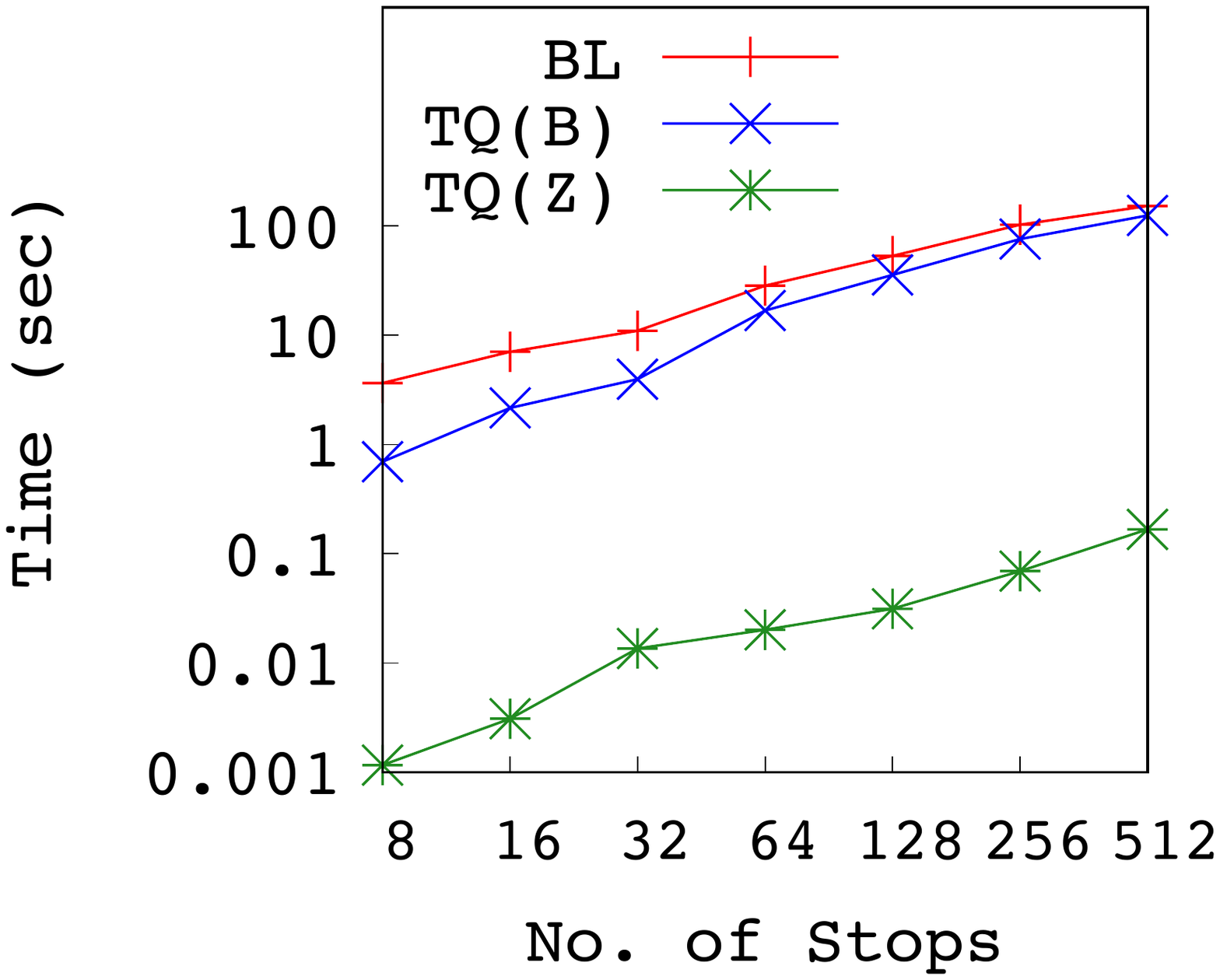}\label{fig:ow}} \hfill
\subfloat[]{\includegraphics[trim = 70mm 70mm 25mm 70mm, clip,width=0.215\textwidth]{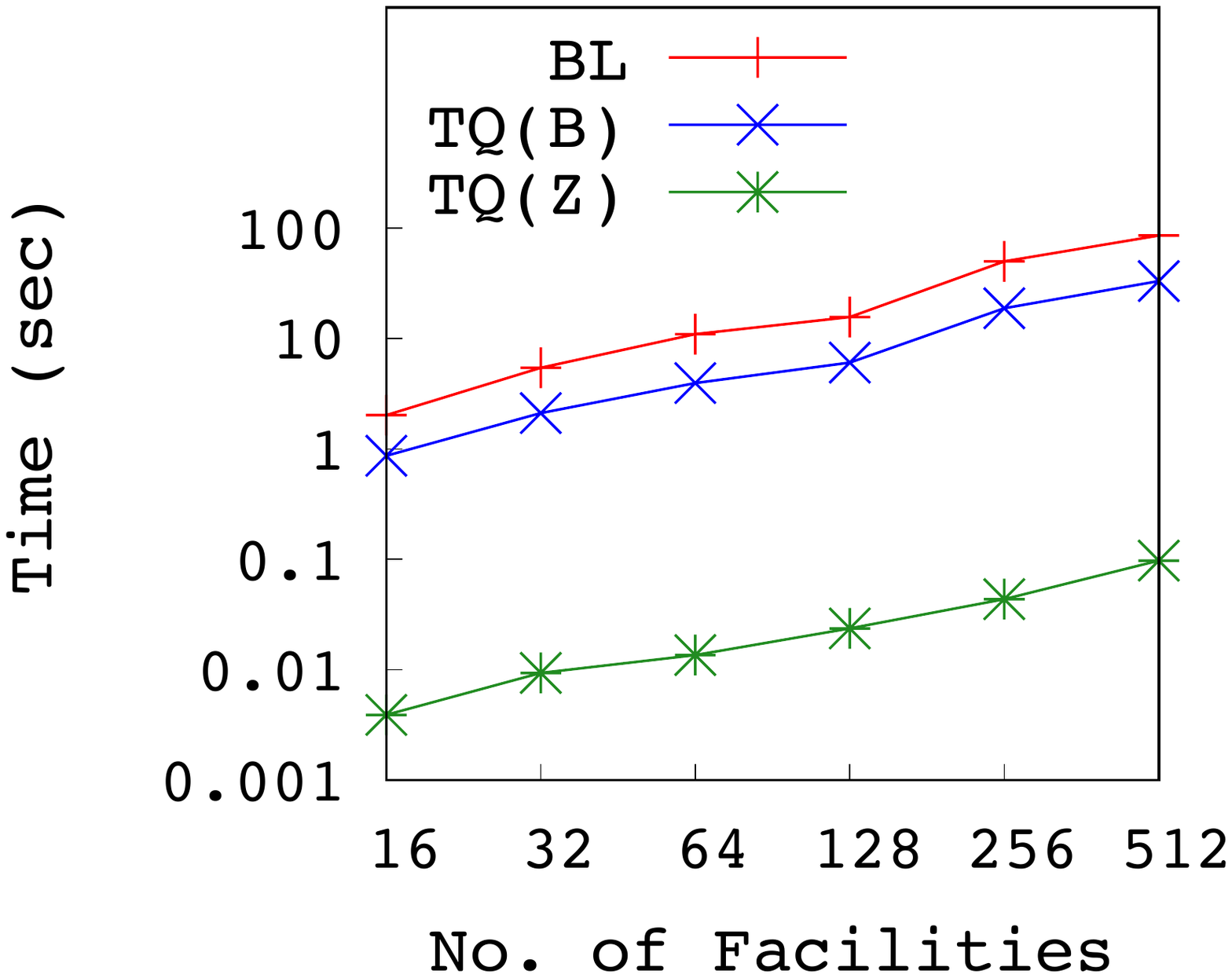}\label{fig:tw}}
\vspace{-6pt}
\caption{Evaluating $k$MaxRRST for varying (a) Users, (b) $k$, (c) stops, and (d) facilities for NYT datasets.}
\vspace{-22pt}
\label{fig:topk}
\end{figure*}

\myparagraph{2) Processing $k$MaxRRST}
We evaluate our proposed algorithms to answer $k$MaxRRST, and compare the performance with the baseline.

\noindent{\bf \emph{(i) No. of user trajectories: }}The algorithm using the \tqz index outperforms the baseline by
\emph{at least} $2-3$ orders of magnitude and \tqb by around $2$ orders of magnitude (Figure~\ref{fig:topk}(a)). As the number of trajectories in the user list of each q-node increases with the total number of user trajectories in the dataset, the benefits of TQ-tree based indexes decrease gradually. The number of unique z-ids in the Z-ordering, and the number of z-nodes also increase with the number of user trajectories, thus the processing time in \tqz increases at a higher rate than the other two approaches. 

\noindent{\bf \emph{(ii) No. of results ($k$): }}We vary the number of the required answers $k$ and compare the performances. As the baseline computes the service value of each facility and return $k$ facilities with the maximum values, the processing time of the baseline do not vary for $k$. The runtime of both TQ-tree based approaches slightly increase with the increase of $k$ as more iterations in the divide-and-conquer approach are likely to be required for a higher $k$ (Figure~\ref{fig:topk}(b)).

\noindent{\bf \emph{(iii) No. of stops: }} Similar to the previous results shown for computing the service value of a facility, the processing time of $k$MaxRRST for varying the number of stops of each facility gradually increase for all of the approaches (Figure~\ref{fig:topk}(c)). The runtime of \tqb is around $1$ order of magnitude faster than the baseline for smaller number of stops, but the benefit decreases for a higher number of stops. The reason is that the number of iterations in the divide-and-conquer approach increases with the number of stops, and the list of trajectories in the user list of a q-node needs to be searched linearly in the \tqb each time (as there is no ordering of the trajectories in the list). \tqz consistently outperforms the baseline by around $3$ orders of magnitude with the help of the efficient two-level index.

\noindent{\bf \emph{(iv) No. of facilities: }}As more computations are required to find the top-$k$ facilities from a higher number of candidate facilities, the runtime increases for each approaches at around the same rate as shown in Figure~\ref{fig:topk}(d). Although \tqb consistently outperforms the baseline, the runtime of the baseline and \tqb may not suitable for an efficient ad-hoc route planning with a higher number of facilities. The \tqz answers the query in the scale of milliseconds, and is around $3$ orders of magnitude faster than the baseline.

\begin{figure}
\centering
\subfloat[]{\includegraphics[trim = 20mm 70mm 25mm 70mm, clip,width=0.28\textwidth]{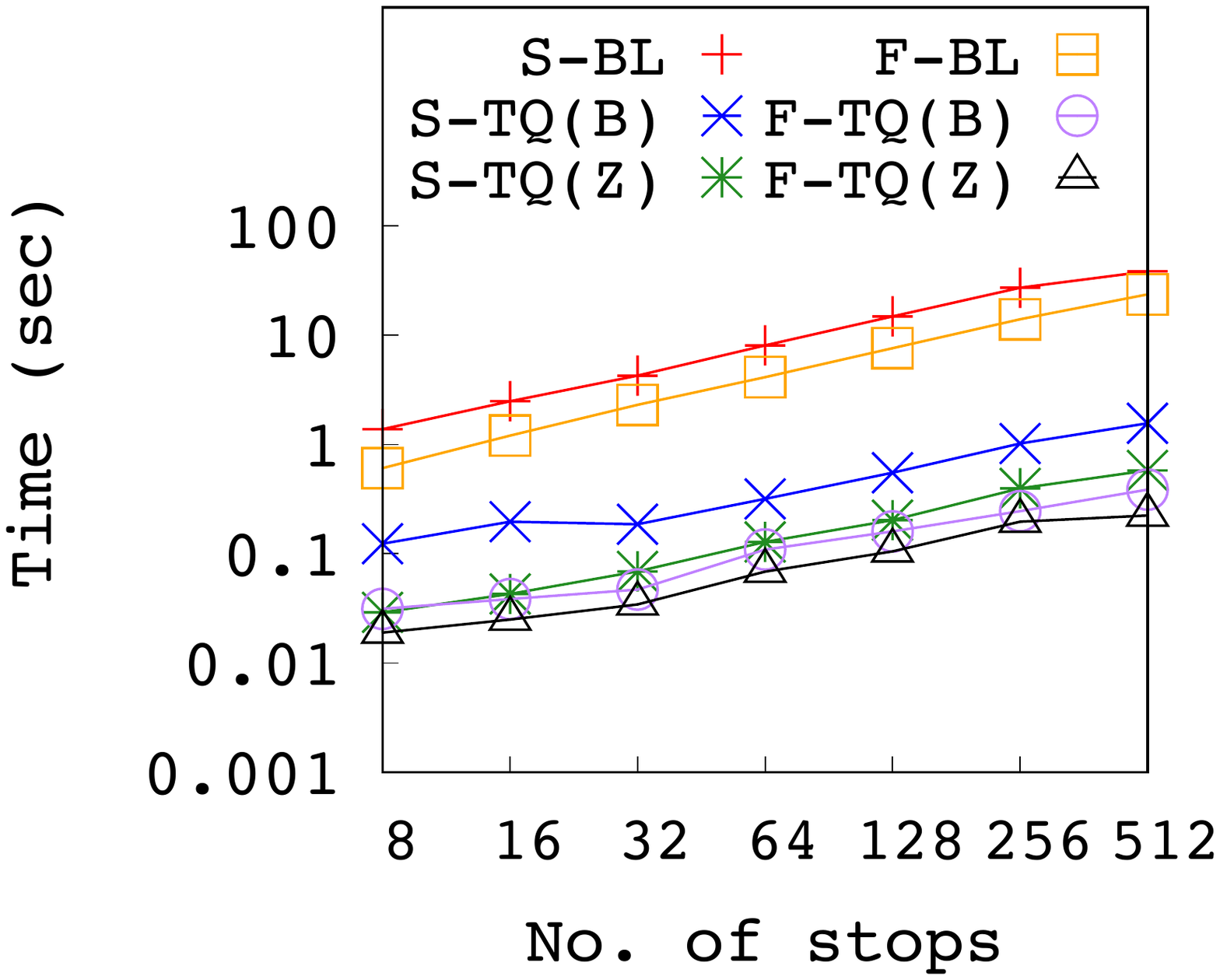}\label{fig:ow}} \hfill
\subfloat[]{\includegraphics[trim = 70mm 70mm 25mm 70mm, clip,width=0.195\textwidth]{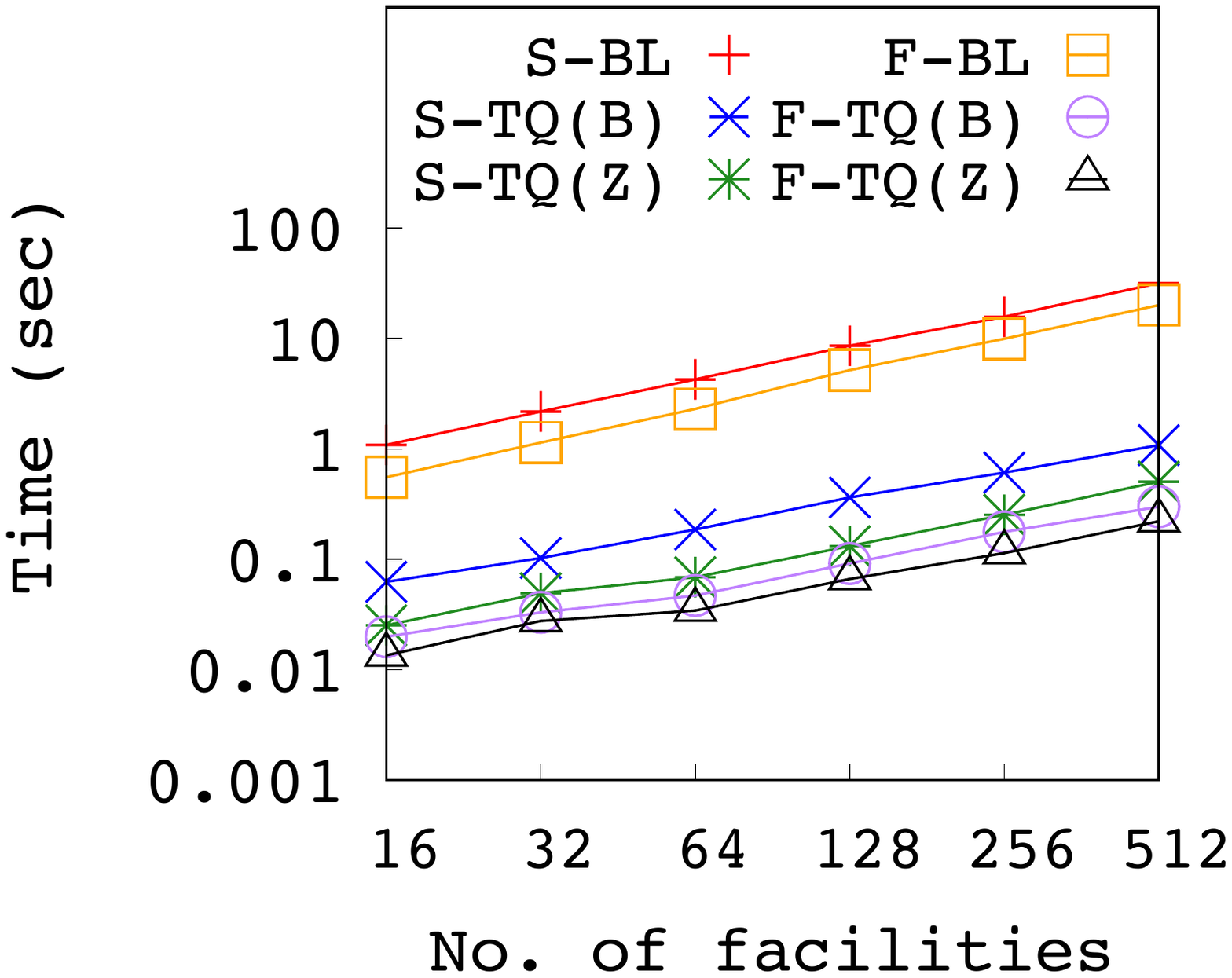}\label{fig:tw}}
\vspace{-6pt}
\caption{Evaluating $k$MaxRRST for varying number of (a) stops (b) facilities for New York Foursquare multipoint datasets.}
\vspace{-22pt}
\label{fig:topkNYF}
\end{figure}

\myparagraph{3) $k$MaxRRST for multipoint datasets}

\emph{NYF Dataset:} Since each user trajectory in the NY Foursquare-checkins dataset is a sequence of
points, we evaluated $k$MaxRRST queries using the two genaralized versions
of the index: a Segmented TQ-tree (S-TQ) and Full trajectory
TQ-tree (F-TQ) (please see Section~\ref{sec:index_gen}).
In the S-TQ version, two consecutive check-ins of a user are
considered as a segment, and all such segments of all users are
indexed using the TQ-tree. For F-TQ, we consider the sequence of checkins in a day of a user as a single multipoint trajectory, and index these
trajectories using the TQ-tree. For both approaches, we compare the performance for both the TQ-tree basic and the TQ-tree (Z-order) indexes.

Figure~\ref{fig:topkNYF} shows the results of our approaches when varying (a)
the number of stops and (b) number of facilities.
The F-TQ based approaches perform better than S-TQ as the number of trajectories increases
significantly in the segmented approach. The performance gap between the S-TQ-tree basic (S-TQ(B)) and the S-TQ(Z) is around $1$ order of magnitude, which is smaller than the previous experiments.
The underlying reason is that for smaller segments, TQ-tree contains
fewer trajectories in internal nodes of the TQ-tree, and
thus z-order based performance gain cannot be achieved.
For the same reason, we have found that the F-TQ based
approaches outperform the S-TQ based approach. In all cases, our proposed approaches for
processing $k$MaxRRST using multipoint trajectories significantly
outperform the baseline.

\begin{figure}
\centering
\subfloat[]{\includegraphics[trim = 20mm 70mm 25mm 70mm, clip,width=0.28\textwidth]{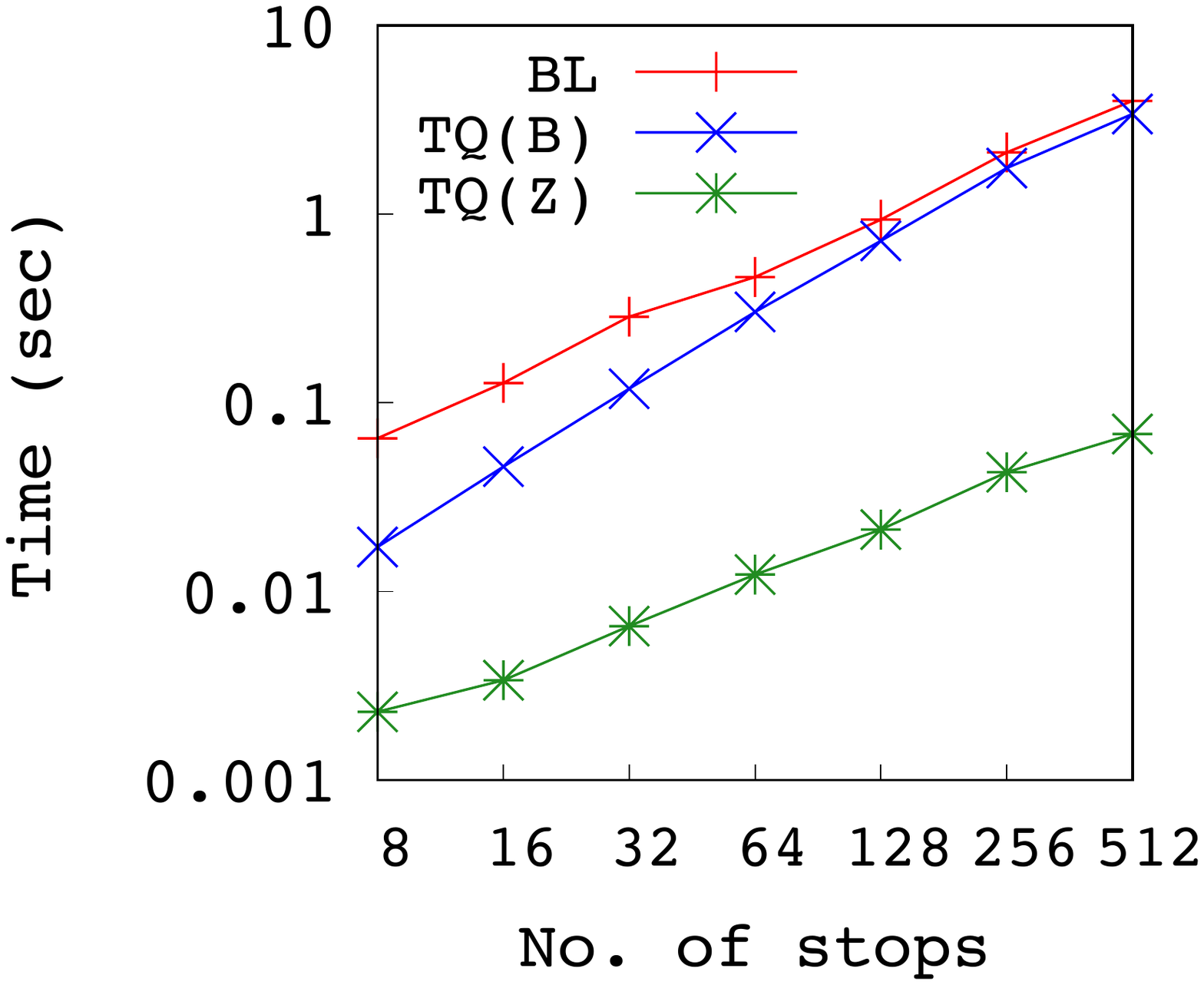}\label{fig:ow}} \hfill
\subfloat[]{\includegraphics[trim = 70mm 70mm 25mm 70mm, clip,width=0.195\textwidth]{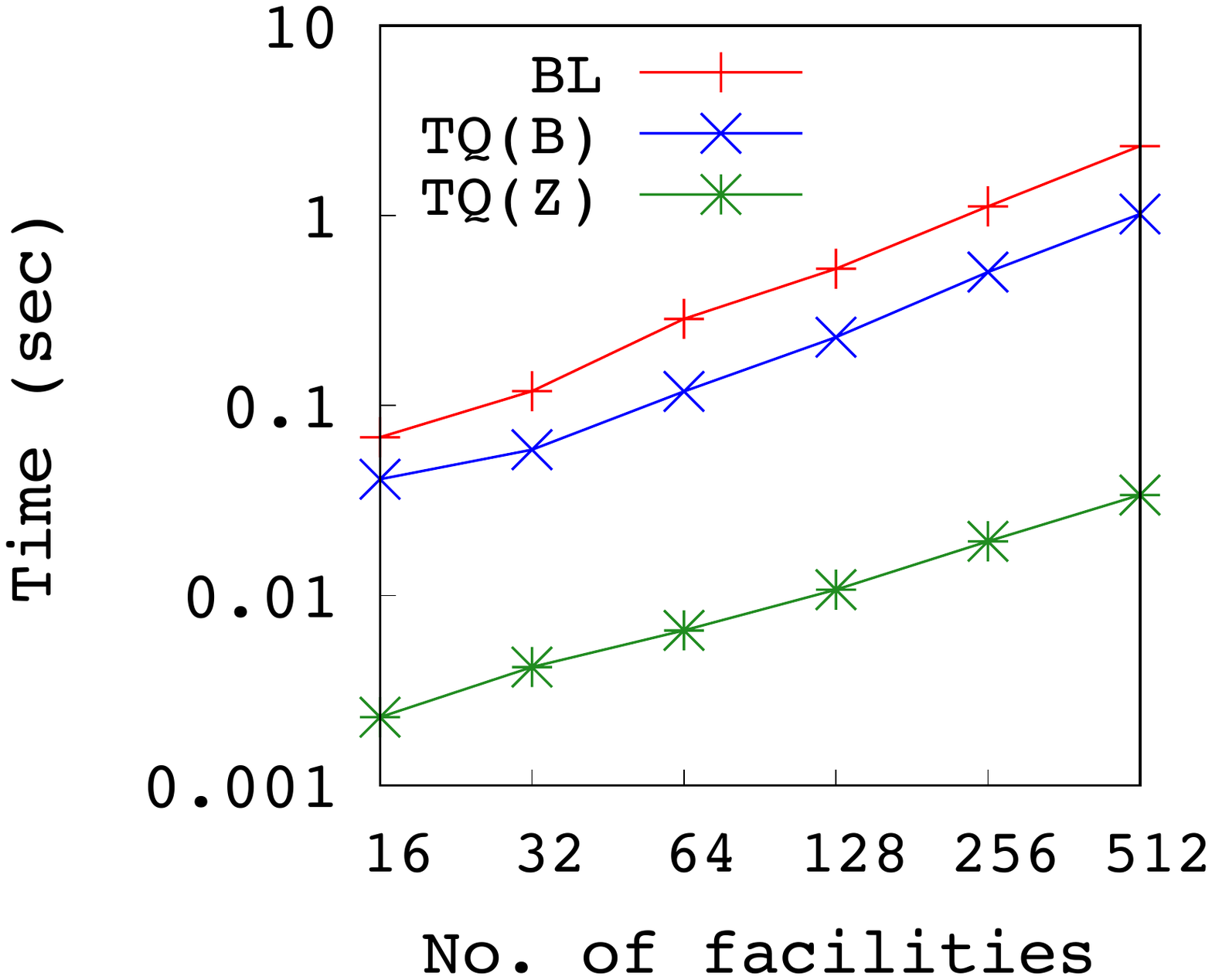}\label{fig:tw}}
\vspace{-6pt}
\caption{Evaluating $k$MaxRRST for varying number of (a) stops (b) facilities for Beijing Geolife multipoint datasets.}
\vspace{-22pt}
\label{fig:topkBJG}
\end{figure}

\emph{BJG dataset: }We evaluate our algorithms on another multipoint trajectory dataset from the Geolife project.
Since the dataset is small, we run the experiments with the segmented
TQ-tree approach, and consider every pair of points as a single
trajectory.
Figure~\ref{fig:topkBJG} shows that even for a small dataset our
TQ-tree based approaches significantly outperform the baseline.

 \begin{figure*}
\centering
\subfloat[]{\includegraphics[trim = 20mm 70mm 25mm 70mm, clip,width=0.25\textwidth]{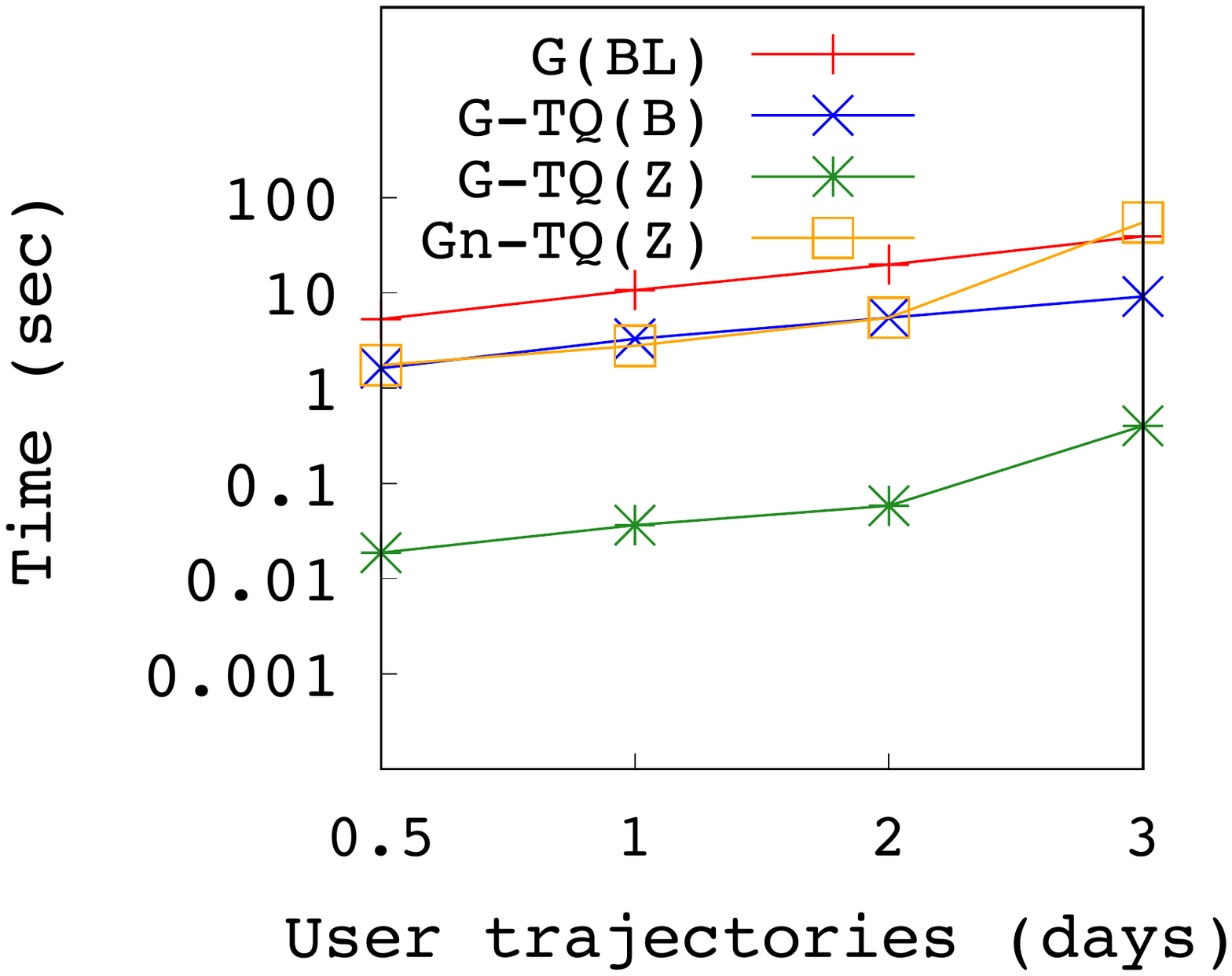}\label{fig:ow}} \hfill
\subfloat[]{\includegraphics[trim = 20mm 20mm 50mm 20mm, clip,width=0.25\textwidth]{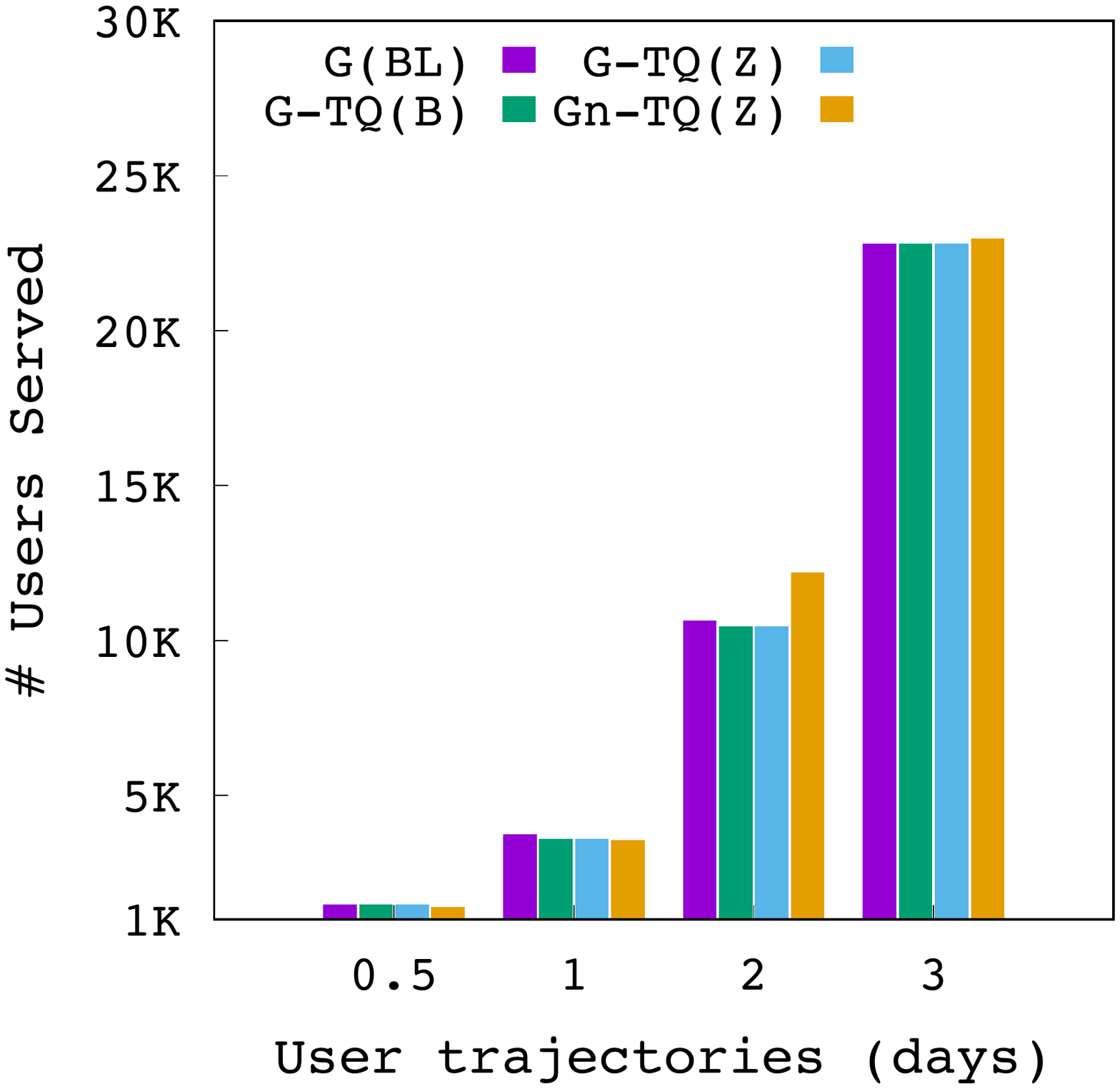}\label{fig:tw}} \hfill
\subfloat[]{\includegraphics[trim = 20mm 70mm 25mm 70mm, clip,width=0.25\textwidth]{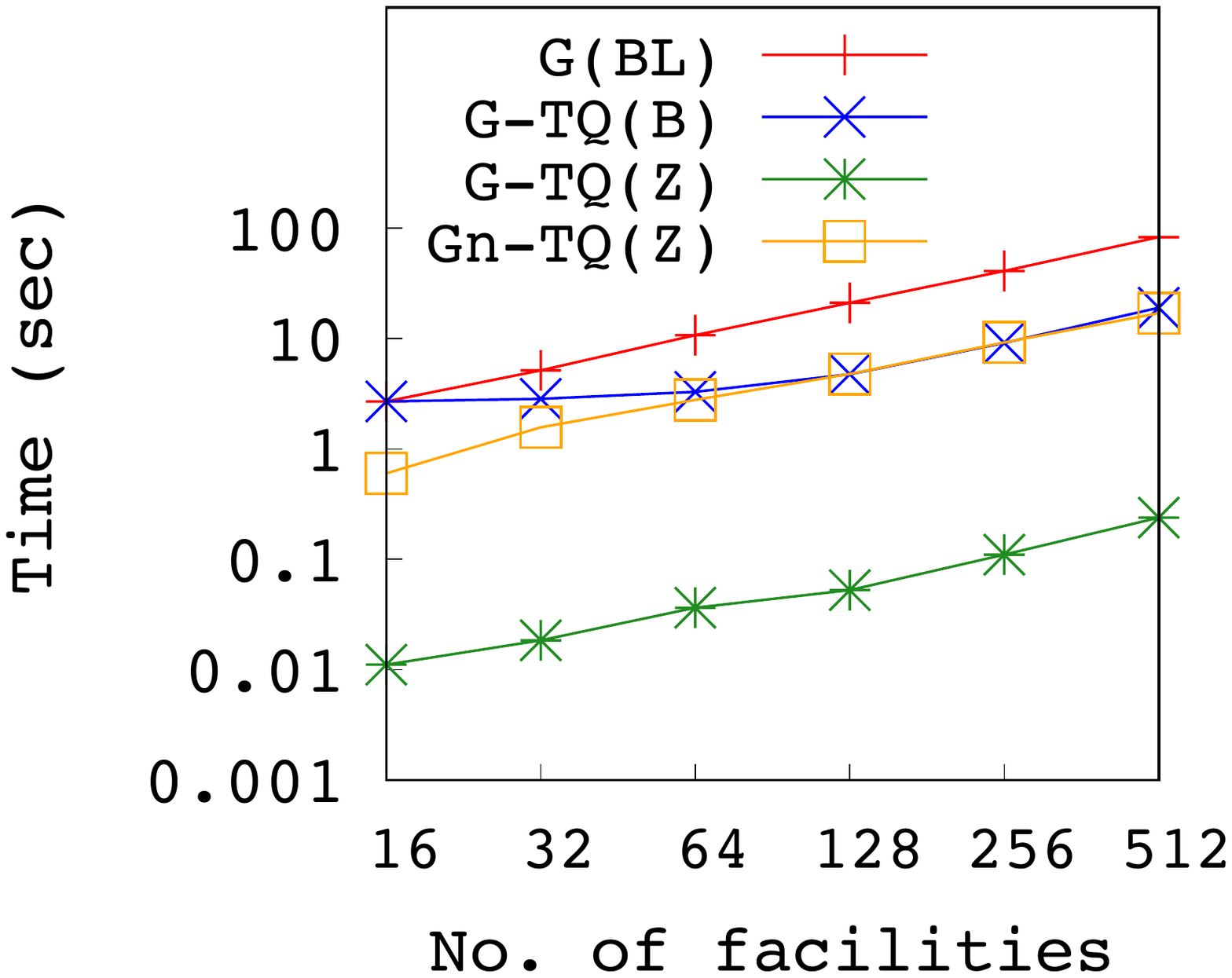}\label{fig:ow}} \hfill
\subfloat[]{\includegraphics[trim = 20mm 20mm 50mm 20mm,clip,width=0.25\textwidth]{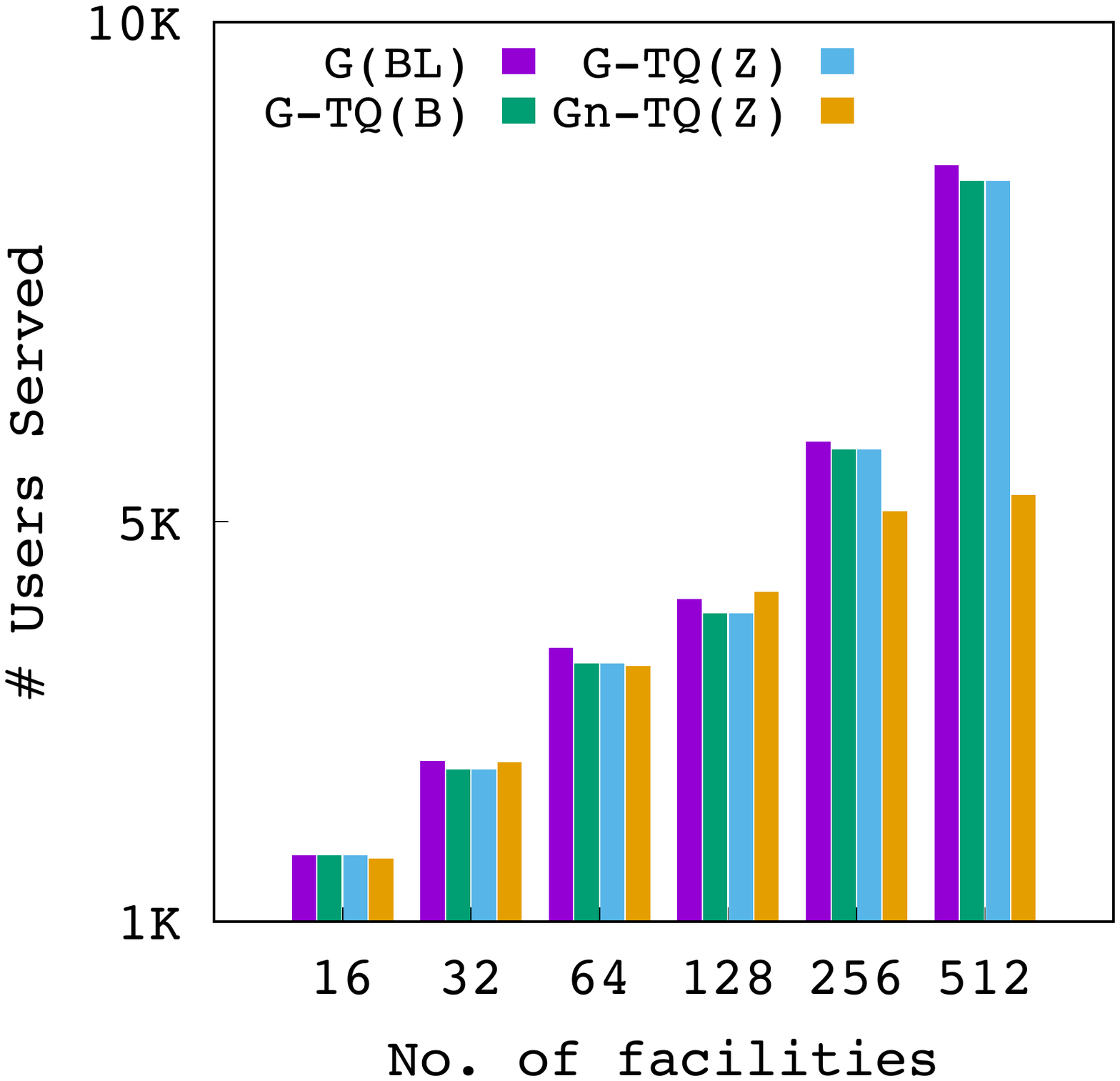}\label{fig:tw}}
\vspace{-6pt}
\caption{Evaluating Max$k$CovRST for varying (a)-(b) users (c)-(d) facilities for NYT datasets.}
\vspace{-22pt}
\label{fig:maxkNYT}
\end{figure*}

\begin{figure}
\centering
\subfloat[]{\includegraphics[trim = 45mm 20mm 45mm 25mm, clip,width=0.27\textwidth]{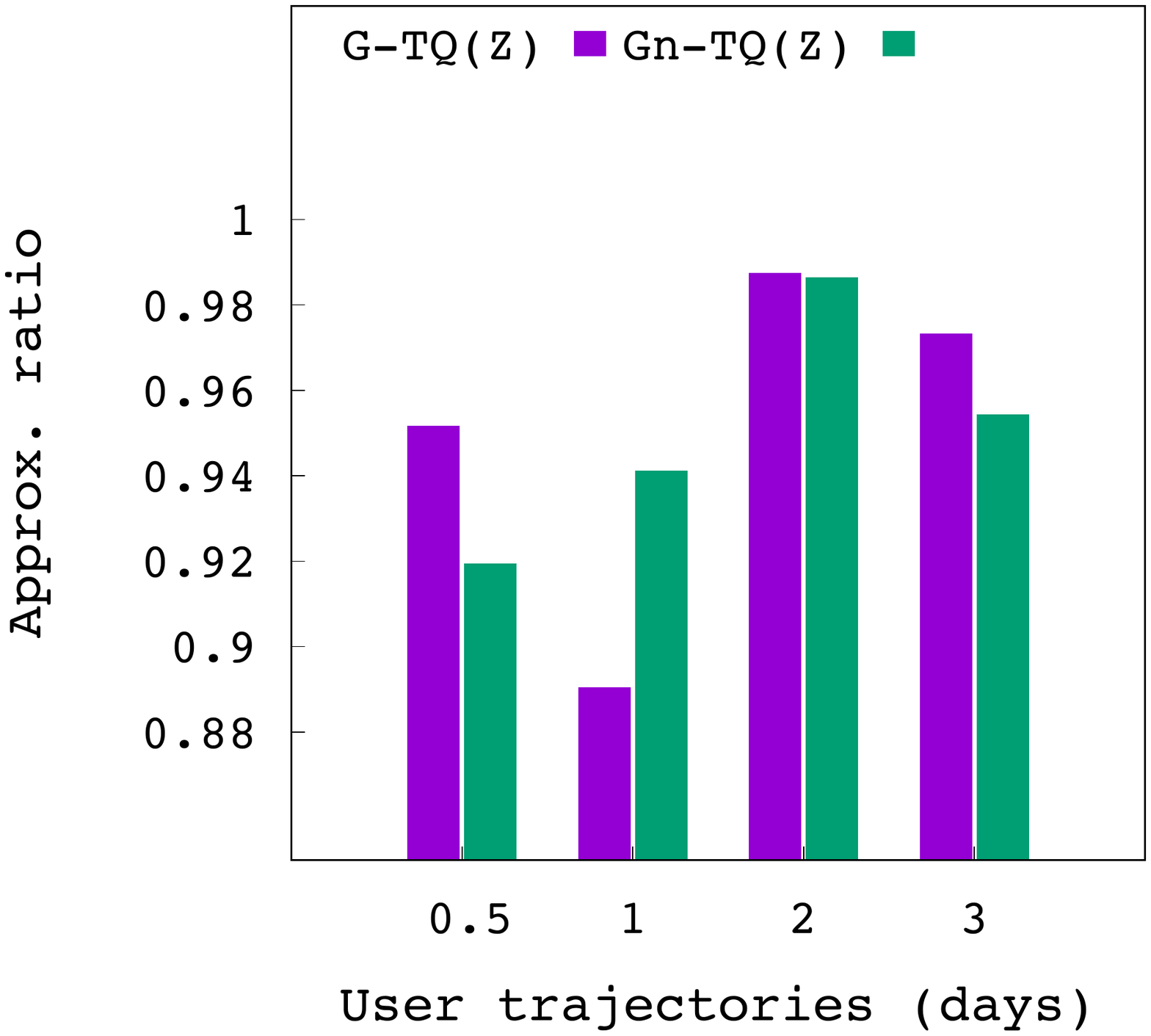}\label{fig:ow}} \hfill
\subfloat[]{\includegraphics[trim = 90mm 20mm 45mm 25mm, clip,width=0.205\textwidth]{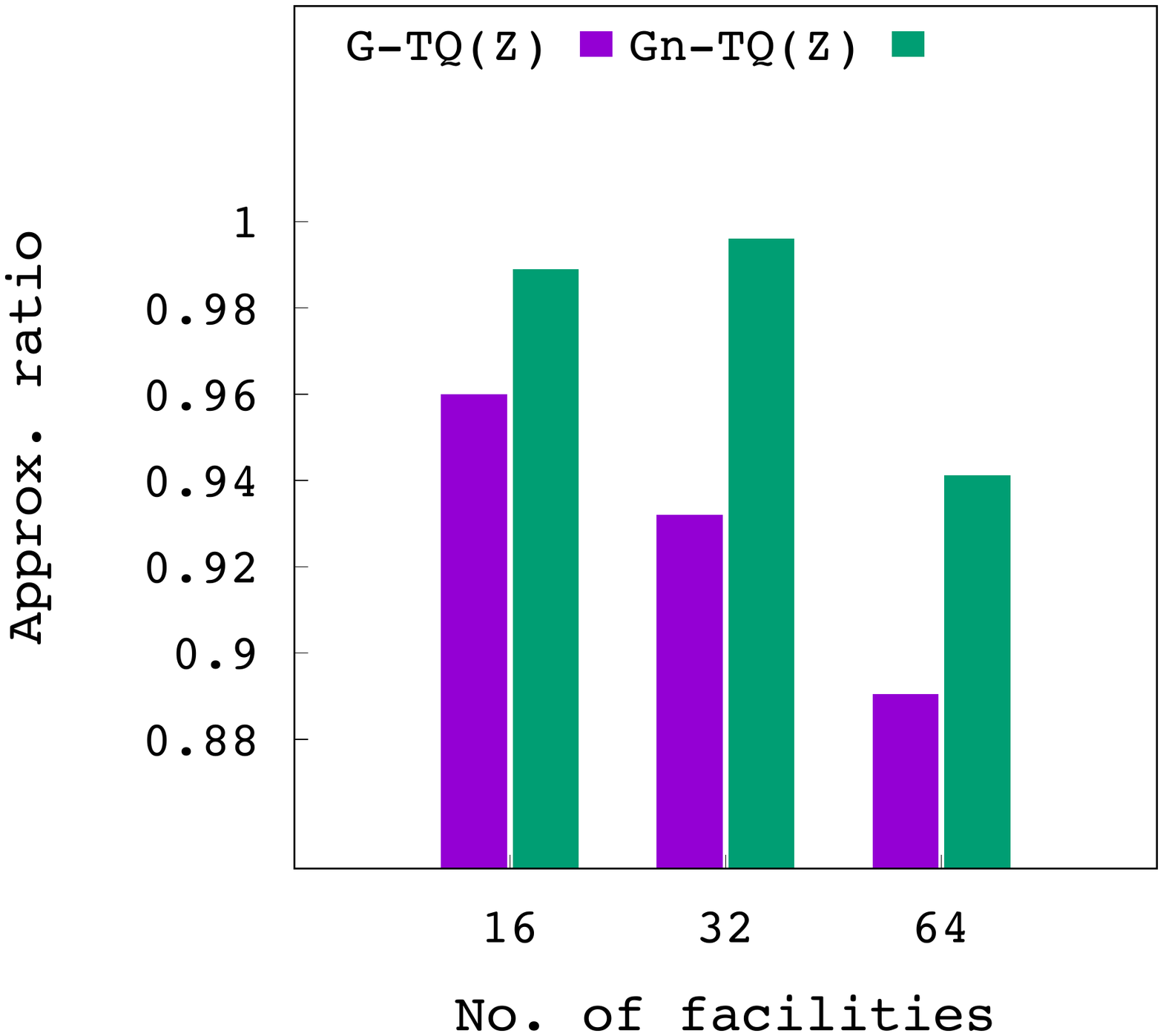}\label{fig:tw}}
\vspace{-6pt}
\caption{Approximation ratio for evaluating Max$k$CovRST for varying (a) users (b) facilities for NYT datasets.}
\vspace{-22pt}
\label{fig:ratio}
\end{figure}
 
\myparagraph{4) Evaluate Max$k$CovRST}
We also evaluated the effectiveness and efficiency of our greedy algorithm, and compare between the competitive approaches
Figure~\ref{fig:maxkNYT} shows the processing time for varying the number of users and facilities for processing Max$k$CovRST in NYT dataset. The G-TQ(Z) outperforms other approaches by a big margin.

We also evaluate the quality of our approaches in terms of number of users served (Figure~\ref{fig:maxkNYT}(b), Figure~\ref{fig:maxkNYT}(d)) and as the approximation ratio with the exact solution (Figure~\ref{fig:ratio}).  Experimental evaluation shows that the approximation ratio of our greedy TQ(Z) is close to the exact solution in most of the cases, and at least achieves $0.9$ ratio. The genetic algorithm (20 iterations) performs poorly in terms of the number of users served when the number of facilities is large (Figure~\ref{fig:maxkNYT}(d)).

\emph{\textbf{Index construction time:} } We evaluate the index construction cost of both TQ(B) and TQ(Z). The index construction for $203{,}308$, $357{,}139$, $697{,}796$, and $1{,}032{,}637$ users trips of NYT data takes only 0.74, 0.95, 2.42, 3.74 secs, respectively for TQ(B), and 1.03, 1.86, 4,23, 9.95 secs, respectively for TQ(Z). The index construction times for the other datasets are also less than a second for both indexes.
\section {Related Work} \label{sec:related}
The related body of work mostly includes
studies in trajectory indexing and query processing, facility
location selection problems, and the route planning algorithms.

\subsection{Trajectory Indexing and Queries}
There have been studies to find user trajectories, including finding human mobility patterns~{\cite{LiuLYDFXXW14}}, detecting taxi trajectories~{\cite{ChenZCLSL11}}, etc. However, as we only use the user trajectories directly as input, the methods for constructing trajectories is outside the scope of this paper.
Relevant studies on trajectories can be categorized mainly as: (i) trajectory search by similarity, (ii) trajectory search by point
locations, and (iii) reverse $k$ nearest neighbor (R$k$NN) queries on
trajectories. Studies on each of these categories propose a variety of different indexes and algorithms.
We also discuss other approaches addressing trajectory storage and retrieval in general.

\myparagraph{Trajectory Search by Similarity}
Frentzos et al.~{\cite{FrentzosGT07}} define a dissimilarity metric
between two trajectories and apply a best-first technique to return the $k$ most similar trajectories to a query trajectory.
Chen et al.~{\cite{ChenOO05}} address the problem of finding similar
trajectories based on the edit distance. A comparative review of different measures of similarity is presented in~\cite{WangSZSZ13}.
Shang et al.~{\cite{ShangDYXZK12}} study a variant of this problem,
where both location and textual attributes of the
trajectories are considered.

The significance of each point in a query trajectory is taken into
consideration in~\cite{ShangDZJKZ14}, where users can
specify a weight for each point in the query trajectory to find the $k$ most similar trajectories using the weights in the similarity function.
The general idea is to take each point along the query trajectory and
check whether a circle with the point as centre and a
threshold based on the user-defined weight as the radius, touches any
data trajectory.
Based on whether a trajectory is touched for each point or some points of the query trajectory, a lower and an upper bound of similarity is
calculated, and different pruning techniques are applied.
However, this approach is not directly amenable to our problem, as
this computation needs to be repeated for each of the
facilities, which will incur a high computational cost and
unnecessary, repeated retrieval of trajectories.
Moreover, this approach cannot efficiently answer the Max$k$CovRST query, where an user trajectory can be served jointly
by multiple facility trajectories.

\myparagraph{Trajectory Search by Point Location}
Given a set of query points, Tang et al.~\cite{TangZXYYH11} answer the $k$ nearest trajectories, where the distance to a trajectory is calculated as the sum of the distances from each
query point to its nearest point in that trajectory.
Han et al.~\cite{HanCZLW14} find the top-$k$ trajectories that are
close to the set of query points with respect to traveling time.
Given a set of query locations, finding the top-$k$ trajectories that
best connect the points (either maintaining order or unordered) are studied.
Each of these solutions use variations on the $R$-tree to store
trajectory points. As computing both the individual and partial service is important in our case, these techniques are not useful for our problem. Adapting these approaches would affect our pruning strategy greatly, resulting in higher
computational complexity as it will not be easy to exclude the 
inter-node trajectories by indexing the
points independently. Also, the queries (facilities) in our problem are also trajectories, not just points.

\myparagraph{Reverse $k$NN Trajectory Queries}
Given a set of user trajectories $U$, a set of facility (bus)
trajectories $F$, and a new facility trajectory $f \not\in F$, an R$k$NN query returns
the user trajectories from $U$ for which $f$ is one of the $k$
nearest facilities.
Wang et al.~\cite{WangBCSC17} address this problem where they
consider each user trajectory as transitions (trajectories with just
pickup and drop-off points).
In contrast to their work, we assume a user can be served by a
facility if the trajectories (stop points) are sufficiently
close. Moreover, their approach cannot be used to solve the Max$k$CovRST
query, where a user trajectory can be served jointly by multiple
trajectories.

\myparagraph{Other Storage Techniques}

Other index structures, e.g., {\em TrajTree}~\cite{RanuPTDR15}, {\em
SharkDB}~\cite{WangZZS15} are also proposed to efficiently
store trajectories.
However, as segmentation of trajectories is required to construct
TrajTree~\cite{RanuPTDR15}, this index is not amenable to 
our problem when computing the served portions of the individual
user trajectories.
SharkDB~\cite{WangZZS15} is an in-memory column oriented
timestamped storage solution used for indexing trajectory data.
This index can support $k$NN and window queries in the spatio-temporal
domain, but cannot be directly applied to solve Max$k$CovRST where identifying trajectories can be partially
served by a facility trajectory.

\subsection{Facility Location Selection Problem}
Several studies have investigated the problem of finding a location
or a region in space to establish a new facility such that the
facility can serve the maximum number of customers based on different
optimization criteria.
The min-dist selection problem finds a location for a facility
such that the average distance from each customer to the closest
facility~\cite{mindist} is minimized.
A similar problem was presented by Papadias et al.
~\cite{groupnn}, that finds a location that minimizes the sum of the
distances from the users.

A {\em Maximizing Bichromatic Reverse $k$NN} query~\cite{WongOYFL09,ZhouWLLH11} finds the optimal region in
space to place a new facility $f$ such that the number of customers
for which $f$ is one of the $k$NNs, is maximized.

These queries focus on point data or regions in space, thus they are not
directly applicable to our problem on trajectories.

\subsection {Route/Trip Planning}
Bus network design is known to be a complex, non-linear, non-convex,
multi-objective NP-hard problem~\cite{ChenZLZ14}.
Based on mobility patterns, there are a number of solutions for recommending driving route~\cite{ChenZZLAL13}, discovering popular routes~\cite{ChenSZ11},
or recommending modification of existing routes/introducing new routes~\cite{LiuLYDFXXW14}.
The MaxR$k$NNT query was proposed by Wang et al.~\cite{WangBCSC17},
which focus on constructing an optimal bus route based on a Reverse
$k$NN trajectory query.
Lyu et al.~\cite{LyuCLLZ16} propose new bus routes by processing taxi
trajectories while other works~\cite{HuangBJW14,ChenZZLAL13} aimed at
constructing bus routes by analyzing hotspots of user
trajectories.

Variants of ride-sharing problems have also received considerable
attention in literature. Ma et al.~\cite{MaZW13} present techniques to
dynamically plan taxi ride-sharing.
Given a set of location preferences,
recommending a travel trajectory that passes through those locations
have been extensively studied~\cite{ZhuXLZLZ15}.

In contrast, our proposed $k$MaxRRST and Max$k$CovRST queries focus on finding a
subset of the query trajectories that serve the highest number of
users locally and globally, respectively.
So unlike some of the aforementioned works that find the best route
offline, we can support online query processing.

\section{Conclusion}
\label{conclusion}

In this paper, we have proposed a novel index structure, the Trajectory Quadtree ($TQ$-tree) that utilizes a Quadtree to hierarchically organize trajectories into different Quadtree nodes, and then applies a z-ordering to further organize the trajectories by spatial locality inside each node. We have demonstrated that such a structure is highly effective in pruning the trajectory search space for processing a new class of coverage queries for trajectory databases:  (i) {\em Maximizing Reverse Range Search on Trajectories} (MaxRRST); and (ii) a {\em Maximum $k$ Coverage Range Search on Trajectories} (Max$k$CovRST). We have evaluated our algorithms through an extensive experimental study on several real datasets, and demonstrated that our $TQ$-tree based algorithms outperform common baselines by two to three orders of magnitude in terms of processing the coverage queries on trajectory databases. In future, we will investigate the effectiveness of the $TQ$-tree for other variants of queries on trajectory databases.

\begin{small}
\footnotesize

\bibliographystyle{IEEEtran}
\bibliography{vldb-maxtraj}

\end{small}

\end{document}